\newcommand{\tinyspace}{\mspace{1mu}}
\newcommand*{\bra}[1]{\langle #1|}
\newcommand*{\ket}[1]{|#1\rangle}
\newcommand{\ketbra}[1]{\ket{#1}\bra{#1}}
\newcommand{\braket}[2]{\langle #1|#2\rangle}
\DeclareRobustCommand\openone{\leavevmode\hbox{\upshape\small1\normalsize\kern-.33em1}}%
\newcommand{\tensor}{\otimes}
\newcommand{\tprod}{\otimes}
\newcommand{\norm}[1]{\left\lVert\tinyspace#1\tinyspace\right\rVert}
\newcommand{\tnorm}[1]{\norm{#1}_{\mathrm{tr}}}
\newcommand{\dnorm}[1]{\norm{#1}_{\diamond}}
\newcommand{\opnorm}[1]{\norm{#1}_{\infty}}
\newcommand{\smallnorm}[1]{\bigl\lVert\tinyspace#1\tinyspace\bigr\rVert}
\newcommand{\smalldnorm}[1]{\smallnorm{#1}_{\diamond}}
\newcommand{\abs}[1]{\left\lvert\tinyspace #1 \tinyspace\right\rvert}
\newcommand{\dm}[1]{\dim \mathcal{#1}}
\newcommand{\linear}[1]{\mathbf{L}(\mathcal{#1})}
\newcommand{\density}[1]{\mathbf{D}(\mathcal{#1})}
\newcommand{\transform}[1]{\mathbf{T}(\mathcal{#1})}
\newcommand{\id}{\openone}
\newcommand{\identity}[1]{\id_{\mathcal{#1}}}
\newcommand{\tidentity}[1]{\id_{\mathcal{#1}}}
\newcommand{\tr}{\operatorname{tr}}
\newcommand{\ptr}[1]{\tr_\mathcal{#1}}
\newcommand{\ceil}[1]{\left\lceil #1 \right\rceil}
\newcommand{\prob}[1]{\textup{\textsc{#1}}}
\newcommand{\class}[1]{\textup{\textsf{#1}}}
\newtheorem{theorem}{Theorem}
\newtheorem{lemma}[theorem]{Lemma}
\newtheorem{proposition}[theorem]{Proposition}
\theoremstyle{definition}
\newtheorem{defn}[theorem]{Definition}
\newtheorem{proto}[theorem]{Protocol}
\newtheorem{problem}[theorem]{Problem}
\title{Testing quantum circuits and detecting insecure encryption}
\author{Bill Rosgen \\
  Centre for Quantum Technologies \\
  National University of Singapore \\
  \texttt{bill.rosgen@nus.edu.sg}}
\date{August 4, 2011}
\begin{document}

\maketitle

\begin{abstract}
  We show that computational problem of testing the behaviour of quantum circuits is hard for the class of problems known as~\class{QMA} that can be verified efficiently with a quantum computer.
  This result is a generalization of the techniques previously used to prove the hardness of other problem on quantum circuits.
 We use this result to show the~\class{QMA}-hardness of a weak version of the problem of detecting the insecurity of a symmetric-key quantum encryption system, or alternately the problem of determining when a quantum channel is \emph{not} private.
  We also give a \class{QMA} protocol for the problem of detecting insecure encryption to show that it is \class{QMA}-complete.
\end{abstract}


\section{Introduction}

Testing the behaviour of a computational system is a problem central to the study of quantum computing.
This is the problem faced by an experimentalist who has implemented a quantum computation and wants to check that the implementation behaves (approximately) correctly on all input states.
An efficient solution to this problem would allow for the verification that a circuit provided by an untrusted party correctly implements some desired operation.
Unfortunately we show in a general model that even a weak version of this problem is likely to be computationally intractable and so any solution to this problem will need to make essential use of the structure of the operation that the circuit is supposed to implement.
The problem we consider is, given a quantum circuit, to decide between two cases: either the circuit acts in the desired way on all input states, or the circuit misbehaves, acting in some malicious way on a large subspace of input states.
This problem is \class{QMA}-hard even when both the desired and malicious behaviour are known (i.e.\ specified by uniform families of quantum circuits).

The class \class{QMA} is the set of all problems that can be verified up to bounded error on a quantum computer.
Several problems are known to be complete for \class{QMA}: these problems can be thought of as alternate characterizations of the class, as they capture exactly the power of this computational model.
The first of these complete problems is the problem of determining the ground state energy of a local Hamiltonian.  This was first shown to be complete on $k$-local Hamiltonians~\cite{KitaevS+02} for $k \geq 5$, before the problem was shown to remain hard in the 2-local case~\cite{KempeK+06}.
The problem of determining if local descriptions of a quantum system are consistent is also known to be \class{QMA}-complete~\cite{Liu06}, though only under Turing reductions.
Other problems related to finding ground states of physical systems are also known to be complete for \class{QMA}~\cite{SchuchC+08,SchuchV09}.

There are also problems on quantum circuits that are known to be \class{QMA}-complete.
The first of these is the Non-identity check problem~\cite{JanzingW+05}, which given as input a unitary quantum circuit, the problem is to decide if there is an input on which the circuit acts non-trivially or if the circuit is close to the identity for all input states.  
The problem of determining if a circuit is close to an isometry (i.e. a reversible transformation that maps pure states to pure states) is also known to be \class{QMA}-complete~\cite{Rosgen10non-isometry}.

In this paper we generalize the hardness proofs of~\cite{JanzingW+05, Rosgen10non-isometry} to show that the \class{QMA}-hardness of the problem of testing the properties of the outputs of quantum circuits.  More specifically, we define the circuit testing problem, which has as parameters two uniformly generated families of quantum circuits $\mathcal{C}_1$ and $\mathcal{C}_2$.
The problem is do decide, given an input circuit $C$, whether $C$ acts like circuits from the family $\mathcal{C}_1$ on a large input subspace, or whether $C$ acts like circuits from $\mathcal{C}_2$ for all input states.
Using this result we reprove the \class{QMA}-hardness of non-identity check and non-isometry testing by making choices for the families $\mathcal{C}_1$ and $\mathcal{C}_2$.
We also show that some other circuit problems are hard, such asa version of finding the minimum output entropy (this is similar in spirit to the results in~\cite{BeigiS07}, though our model is incompatible), or determining when a channel has an pure (approximate) fixed point.

It is important to note that, despite the name, this problem is not related to \emph{property testing}.  
In this problem we have a significantly weaker promise---in one case the circuit only behaves in a certain way on a subspace of the input.
For an input space of dimension $d$, this subspace can be as large as $d^{1-\delta}$ for an arbitrary constant $\delta > 0$ but this subspace is still \emph{far} from the whole input space.
Essentially the problem is to detect if the circuit behaves in a certain way only when a specific input state is provided on some subset of the input qubits.
Note also that while we can use this problem to show the \class{QMA}-hardness of several circuit problems, this technique does \emph{not} show that these problems are in~\class{QMA}.

We then apply this hardness result to the problem of detecting insecure quantum encryption.  
This is the problem of deciding, given a quantum circuit that takes as input a quantum state as well as a classical key, whether this circuit is $\varepsilon$-close to a perfectly secure encryption scheme (i.e. a private quantum channel~\cite{AmbainisM+00,BoykinR03}), or whether there is a large subspace of input states that the circuit does not encrypt at all (up to error $\varepsilon$).  
To show that this problem is hard, we argue that this problem contains as a special case an instance of the circuit testing problem.  
Finally, we give a~\class{QMA} verifier for this problem to prove that it is \class{QMA}-complete.

The remainder of the paper is organized as follows: Section~\ref{scn:prelim} contains some mathematical background, a definition of the class~\class{QMA}, and a discussion of private quantum channels.  The hardness of the circuit testing problem is shown in Section~\ref{scn:circuit-testing}.  Finally, Section~\ref{scn:insecure-encryption} contains the proof that the problem of detecting insecure encryption is~\class{QMA}-complete.

\section{Preliminaries}\label{scn:prelim}

\subsection{Background}\label{scn:notation}

Throughout the paper we let $\mathcal{H,K,X,Y,\ldots}$ represent (finite-dimensional) Hilbert spaces.  
The pure quantum states are simply the unit vectors in these spaces.  
The set of density matrices on a space $\mathcal{H}$ is denoted $\density{H}$: these are the positive semidefinite operators with unit trace.
We will use the notation $\transform{H,K}$ to represent the set of channels that map states in $\density{H}$ to states in $\density{K}$.
More formally, these transformations are exactly the completely positive trace preserving linear maps from $\linear{H}$ to $\linear{K}$, where we use $\linear{H}$ to denote the set of all linear operators on $\mathcal{H}$.

To measure the distance between quantum states we will make extensive use of the trace norm, which for a linear operator $X$ can be defined as $\tnorm{X} = \tr \sqrt{X^*X}$.  
A useful alternate characterization is that $\tnorm{X}$ is the sum of the singular values of $X$, or, in the case of a normal operator, the sum of the absolute values of the eigenvalues.
One important property of the trace distance $\tnorm{\rho - \sigma}$ between two states is that it is monotone nonincreasing under the application of quantum channels.

We will also need the intuitive property that two states that are close together in the trace norm produce similar measurement outcomes.  
This can be derived from the fact that an expression involving the trace norm gives the maximum probability that two states can be distinguished~\cite{Helstrom67},
\begin{lemma}\label{lem:measurement-continuity}
  Let $X \in \linear{H}$ satisfy $0 \leq X \leq \id$.  Then
  \begin{equation*}
    \tr(X\rho)  \leq \tr(X\sigma) + \tnorm{\rho - \sigma}
  \end{equation*}
\end{lemma}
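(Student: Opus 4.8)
The plan is to reduce everything to the single inequality $\tr(X(\rho-\sigma)) \le \tnorm{\rho-\sigma}$. By linearity of the trace the left-hand side of the claimed bound rewrites as $\tr(X\rho) - \tr(X\sigma) = \tr(X(\rho-\sigma))$, so the desired conclusion is exactly this inequality. Writing $\Delta = \rho - \sigma$, this operator is Hermitian, so I would appeal to its spectral decomposition to split it into positive and negative parts $\Delta = \Delta_+ - \Delta_-$, where $\Delta_+, \Delta_- \ge 0$ have orthogonal supports. Using the characterization of the trace norm from the preliminaries as the sum of the absolute values of the eigenvalues of a normal operator, this gives $\tnorm{\Delta} = \tr\Delta_+ + \tr\Delta_-$.

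With this decomposition in hand the estimate is immediate. First I would write $\tr(X\Delta) = \tr(X\Delta_+) - \tr(X\Delta_-)$, and then use the two hypotheses on $X$ one each: from $X \le \id$ and $\Delta_+ \ge 0$ we obtain $\tr((\id - X)\Delta_+) \ge 0$, that is $\tr(X\Delta_+) \le \tr\Delta_+$; and from $X \ge 0$ and $\Delta_- \ge 0$ we obtain $\tr(X\Delta_-) \ge 0$. Chaining these, $\tr(X\Delta) \le \tr(X\Delta_+) \le \tr\Delta_+ \le \tr\Delta_+ + \tr\Delta_- = \tnorm{\Delta}$, which is the claim.

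There is essentially no hard step here; the only points requiring care are the sign fact $\tr(AB) \ge 0$ for positive semidefinite $A,B$ (which follows from $\tr(AB) = \tr(A^{1/2}BA^{1/2})$ and positivity) and the reading of $\Delta_+, \Delta_-$ off the spectral decomposition of the Hermitian operator $\Delta$. Alternatively, and even more directly, one could invoke the duality of Schatten norms, $\abs{\tr(X\Delta)} \le \opnorm{X}\,\tnorm{\Delta}$, together with $\opnorm{X} \le 1$ (a consequence of $0 \le X \le \id$), to reach the conclusion in a single line. Either route in fact gives slightly more than is asked: the sharp Helstrom bound referenced above yields $\tr(X\rho) - \tr(X\sigma) \le \tfrac12\tnorm{\rho-\sigma}$, so the factor-of-one looseness in the stated inequality leaves ample room.
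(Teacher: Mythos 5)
Your proof is correct. For comparison: the paper never writes out a proof of this lemma at all --- it simply states it as a consequence of Helstrom's distinguishability bound, with a citation. What you have done is supply the standard argument that underlies that citation: the Jordan decomposition $\rho - \sigma = \Delta_+ - \Delta_-$ into positive and negative parts with orthogonal supports, the identity $\tnorm{\rho-\sigma} = \tr\Delta_+ + \tr\Delta_-$, and the two one-line estimates $\tr(X\Delta_+) \leq \tr\Delta_+$ (from $X \leq \id$) and $\tr(X\Delta_-) \geq 0$ (from $X \geq 0$). This is precisely how Helstrom's theorem is proved, so in substance you have reconstructed the omitted argument rather than found a new one; what your write-up buys is self-containedness, and the Schatten-duality alternative $\abs{\tr(X(\rho-\sigma))} \leq \opnorm{X}\,\tnorm{\rho-\sigma}$ is an equally valid one-liner. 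One small caveat on your final remark: the sharp constant $\tfrac{1}{2}\tnorm{\rho-\sigma}$ requires $\tr\rho = \tr\sigma$ (so that $\tr\Delta_+ = \tr\Delta_-$), which holds here because $\rho$ and $\sigma$ are density matrices, but is not a consequence of your factor-one argument alone; your proof, on the other hand, is valid for arbitrary Hermitian $\rho, \sigma$, which is a mild bonus in generality.
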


In addition to the trace norm, we will also need a distance measure on the quantum channels.
Such a measure is given by the \emph{diamond norm}, which for a linear map $\Phi: \linear{H} \to \linear{K}$ is given by $\dnorm{\Phi} = \sup_{X \in \linear{H \tprod H}} \tnorm{ (\Phi \tprod \tidentity{H})(X) } / \tnorm{X}$.  See~\cite{KitaevS+02} for an alternate definition and some further properties of this norm.
In the case that $\Phi$ is the difference of two completely positive maps, we may replace the supremum in the definition of the diamond norm with a maximization over pure states in the space $\mathcal{H \tprod H}$~\cite{RosgenW05}.
Similarly to the trace norm, the diamond norm can be used to characterize the distinguishability of two quantum channels: here the fact that the definition involves a reference system captures the fact that the optimal strategy to distinguish two channels may involve the use of entangled input states.

Since we consider computational problems on quantum channels, we must specify how they are to be given as input.
For this we use the mixed-state circuit model, first defined in~\cite{AharonovK+98}, where circuits are composed of some (universal) collection of the usual unitary gates, plus a gate that introduces ancillary qubits in the $\ket 0$ state and a gate that traces out (i.e.\ discards) qubits.  
For simplicity we will assume that all Hilbert spaces we encounter are composed of qubits, i.e.\ that the dimension is always a power of two, though this is not strictly needed.

We use this circuit model because it can (approximately) represent any quantum channel, and in the case of efficient quantum circuits this representation is of size polynomial in the number of input qubits.
Using circuits does not (significantly) restrict the applicability of our hardness results: they also apply in any model that can efficiently simulate the circuit model, such as the model of measurement based quantum computation.

\subsection{QMA}

In order to prove results about the class~\class{QMA}, we give a formal definition.
A language $L$ is in \class{QMA} if there is a quantum polynomial-time verifier $V$ such that \begin{enumerate}
  \item if $x \in L$, then there exists a witness $\rho$ such that
    $\Pr[ \text{$V$ accepts $\rho$} ] \geq 1 - \varepsilon$,
  \item if $x \not\in L$, then for any state $\rho$, 
    $\Pr[ \text{$V$ accepts $\rho$} ] \leq \varepsilon$,
\end{enumerate}
The exact value of the error parameter $\varepsilon$ is not significant: any $\varepsilon < 1/2$ that is at least an inverse polynomial in the input size suffices~\cite{KitaevS+02, MarriottW05}.

Let $L$ be an arbitrary language in \class{QMA}, and let $x$ be an arbitrary input string.  
Our goal will be to encode the \class{QMA}-hard problem of deciding if $x \in L$ into the problem of detecting an insecure encryption circuit.  
To do this it will be convenient to represent the verifier as a unitary circuit $V$, which represents the algorithm of the verifier in a \class{QMA} protocol on some input $x$.  
We may ``hard-code'' the input string $x$ into the circuit for $V$, since the circuit $V$ needs only to be efficiently generated given $x$.

The algorithm implemented by the verifier in an arbitrary \class{QMA} protocol is given in Figure~\ref{fig:qma}.  
The verifier receives a witness state $\ket\psi$, applies the unitary $V$ on the witness state and any ancillary qubits needed, and finally measures the first output qubit to decide whether or not to accept.  
Any qubits not measured are traced out.
One of the main results of this paper is a reduction from an arbitrary \class{QMA} verifier to the problem of testing the behaviour of quantum circuits.
\begin{figure}
  \begin{center}
\setlength{\unitlength}{3947sp}%
\begingroup\makeatletter\ifx\SetFigFont\undefined%
\gdef\SetFigFont#1#2#3#4#5{%
  \reset@font\fontsize{#1}{#2pt}%
  \fontfamily{#3}\fontseries{#4}\fontshape{#5}%
  \selectfont}%
\fi\endgroup%
\begin{picture}(2577,1299)(-764,-1423)
\put(-749,-436){\makebox(0,0)[lb]{\smash{{\SetFigFont{12}{14.4}{\rmdefault}{\mddefault}{\updefault}{\color[rgb]{0,0,0}$\ket \psi$}%
}}}}
\thinlines
{\color[rgb]{0,0,0}\put(751,-1036){\line( 1, 0){525}}
}%
{\color[rgb]{0,0,0}\put(751,-1111){\line( 1, 0){525}}
}%
{\color[rgb]{0,0,0}\put(751,-1186){\line( 1, 0){525}}
}%
{\color[rgb]{0,0,0}\put(751,-886){\line( 1, 0){525}}
}%
{\color[rgb]{0,0,0}\put(751,-811){\line( 1, 0){525}}
}%
{\color[rgb]{0,0,0}\put(751,-736){\line( 1, 0){525}}
}%
{\color[rgb]{0,0,0}\put(1276,-511){\oval(450,450)[tr]}
\put(1276,-511){\oval(450,450)[tl]}
}%
{\color[rgb]{0,0,0}\put(151,-1336){\framebox(600,1200){$V$}}
}%
{\color[rgb]{0,0,0}\put(-449,-286){\line( 1, 0){600}}
}%
{\color[rgb]{0,0,0}\put(-449,-361){\line( 1, 0){600}}
}%
{\color[rgb]{0,0,0}\put(-449,-436){\line( 1, 0){600}}
}%
{\color[rgb]{0,0,0}\put(-449,-511){\line( 1, 0){600}}
}%
{\color[rgb]{0,0,0}\put(-149,-961){\line( 1, 0){300}}
}%
{\color[rgb]{0,0,0}\put(-149,-1036){\line( 1, 0){300}}
}%
{\color[rgb]{0,0,0}\put(-149,-1111){\line( 1, 0){300}}
}%
{\color[rgb]{0,0,0}\put(-149,-886){\line( 1, 0){300}}
}%
{\color[rgb]{0,0,0}\put(1276,-586){\vector( 3, 4){225}}
}%
{\color[rgb]{0,0,0}\put(1051,-586){\framebox(450,450){}}
}%
{\color[rgb]{0,0,0}\put(1051,-361){\line(-1, 0){300}}
}%
{\color[rgb]{0,0,0}\put(1276,-736){\vector( 0,-1){675}}
}%
{\color[rgb]{0,0,0}\put(1501,-361){\line( 1, 0){300}}
}%
\put(-449,-1036){\makebox(0,0)[lb]{\smash{{\SetFigFont{12}{14.4}{\rmdefault}{\mddefault}{\updefault}{\color[rgb]{0,0,0}$\ket 0$}%
}}}}
{\color[rgb]{0,0,0}\put(751,-961){\line( 1, 0){525}}
}%
\end{picture}%

  \end{center}
  
  \caption{Verifier's circuit in a \class{QMA} protocol.
    The verifier accepts the witness state $\ket\psi$ if and only if
    the measurement in the computational basis results in the $\ket 1$ state.}
  \label{fig:qma}
\end{figure}

\subsection{Private Quantum Channels}\label{scn:private}

Quantum channels that are secure against eavesdroppers are those
channels for which the input state cannot be determined by the output.
These channels can also be viewed as encryption systems: the
\emph{key} is simply the environment space of the channel, which, when
combined with the output state, allows the input to be recovered.  We
restrict attention to private channels of a special form: those which
allow the input to be recovered not with the quantum state of the
environment but instead with a classical key that can be pre-shared
between two parties that wish to establish a secure quantum channel.
These channels, called, \emph{private} channels, were introduced and
studied in~\cite{AmbainisM+00,BoykinR03}.

An important example of a private quantum channel is the completely
depolarizing channel. This is the channel $\Omega$ that maps any input
to the completely mixed state.  One circuit implementation of this channel is
given in Figure~\ref{fig:depol}.
\begin{figure}
  \begin{center}
\setlength{\unitlength}{3947sp}%
\begingroup\makeatletter\ifx\SetFigFont\undefined%
\gdef\SetFigFont#1#2#3#4#5{%
  \reset@font\fontsize{#1}{#2pt}%
  \fontfamily{#3}\fontseries{#4}\fontshape{#5}%
  \selectfont}%
\fi\endgroup%
\begin{picture}(5274,1974)(289,-1498)
\put(451,-1186){\makebox(0,0)[lb]{\smash{{\SetFigFont{12}{14.4}{\rmdefault}{\mddefault}{\updefault}{\color[rgb]{0,0,0}$\ket{+}^{\otimes 6}$}%
}}}}
\thinlines
{\color[rgb]{0,0,0}\put(901,-961){\line( 1, 0){4050}}
}%
{\color[rgb]{0,0,0}\put(901,-1111){\line( 1, 0){4050}}
}%
{\color[rgb]{0,0,0}\put(901,-1186){\line( 1, 0){4050}}
}%
{\color[rgb]{0,0,0}\put(901,-1261){\line( 1, 0){4050}}
}%
{\color[rgb]{0,0,0}\put(901,-1336){\line( 1, 0){4050}}
}%
{\color[rgb]{0,0,0}\put(1426,-961){\circle*{76}}
}%
{\color[rgb]{0,0,0}\put(2026,-1036){\circle*{76}}
}%
{\color[rgb]{0,0,0}\put(2626,-1111){\circle*{76}}
}%
{\color[rgb]{0,0,0}\put(3226,-1186){\circle*{76}}
}%
{\color[rgb]{0,0,0}\put(3826,-1261){\circle*{76}}
}%
{\color[rgb]{0,0,0}\put(4426,-1336){\circle*{76}}
}%
{\color[rgb]{0,0,0}\put(1201, 14){\framebox(450,450){$X$}}
}%
{\color[rgb]{0,0,0}\put(1801,-361){\framebox(450,450){$X$}}
}%
{\color[rgb]{0,0,0}\put(2401,-736){\framebox(450,450){$X$}}
}%
{\color[rgb]{0,0,0}\put(601,239){\line( 1, 0){600}}
}%
{\color[rgb]{0,0,0}\put(601,-136){\line( 1, 0){1200}}
}%
{\color[rgb]{0,0,0}\put(601,-511){\line( 1, 0){1800}}
}%
{\color[rgb]{0,0,0}\put(3001,239){\line(-1, 0){1350}}
}%
{\color[rgb]{0,0,0}\put(3001, 14){\framebox(450,450){$Z$}}
}%
{\color[rgb]{0,0,0}\put(3601,-361){\framebox(450,450){$Z$}}
}%
{\color[rgb]{0,0,0}\put(4201,-736){\framebox(450,450){$Z$}}
}%
{\color[rgb]{0,0,0}\put(3601,-136){\line(-1, 0){1350}}
}%
{\color[rgb]{0,0,0}\put(4201,-511){\line(-1, 0){1350}}
}%
{\color[rgb]{0,0,0}\put(3451,239){\line( 1, 0){1800}}
}%
{\color[rgb]{0,0,0}\put(4051,-136){\line( 1, 0){1200}}
}%
{\color[rgb]{0,0,0}\put(4651,-511){\line( 1, 0){600}}
}%
{\color[rgb]{0,0,0}\put(1426, 14){\line( 0,-1){975}}
}%
{\color[rgb]{0,0,0}\put(3226, 14){\line( 0,-1){1200}}
}%
{\color[rgb]{0,0,0}\put(3826,-361){\line( 0,-1){900}}
}%
{\color[rgb]{0,0,0}\put(4426,-736){\line( 0,-1){600}}
}%
{\color[rgb]{0,0,0}\put(2626,-736){\line( 0,-1){375}}
}%
{\color[rgb]{0,0,0}\put(2026,-361){\line( 0,-1){675}}
}%
{\color[rgb]{0,0,0}\put(4951,-961){\vector( 0,-1){525}}
}%
{\color[rgb]{0,0,0}\put(901,-1036){\line( 1, 0){4050}}
}%
\end{picture}%

  \end{center}
  
  \caption{Example implementation of the completely depolarizing
    channel $\Omega$ on three qubits. In order to obtain a private
    channel the state the qubits in the $\ket{+}$ state are replaced by a
    classical key $k$ to obtain the channel $\Omega_k$.}
  \label{fig:depol}
\end{figure}

In order to use the completely depolarizing channel as a private
channel we must add a key.  This can be done to the implementation in
Figure~\ref{fig:depol} by replacing the qubits in the $\ket +$ state
with a classical string.  The result is a channel that applies a
key-specified Pauli to each of the input qubits.  We will refer to
this channel as $\Omega_k$ when a specific key is used.  Notice that
if $\Omega_k \in \transform{H}$, then $\abs{k} = 2 \log \dm{H}$, i.e.
we use two key bits for each encrypted qubit.  In the case of a
perfect encryption channel this rate of two key bits per qubit is
optimal~\cite{BraunsteinLS99, BoykinR03, AmbainisM+00}.
When the key $k$ is unknown and uniformly
distributed, the channel $\Omega_k$ is identical to $\Omega$, i.e. if
the key $k$ is uniformly distributed in $\{0, \ldots, 2^m -1\}$ we have
\begin{equation}\label{eqn:depol-key-average}
  \frac{1}{2^m} \sum_k \Omega_k = \Omega.
\end{equation}

We use the following definition of an approximately private channel
(i.e. \emph{secure encryption}).
\begin{defn}
  Let $E$ be a channel that takes two inputs: an integer $k \in \{1,
  \ldots, K\}$ and a quantum state in $\mathcal{H}$ and produces an
  output in $\mathcal{K}$, where $\dm{H} \leq \dm{K}$.  For a fixed
  value of $k$ we write $E_k(\cdot) = E(k, \cdot)$.  We call $E$ a
  \emph{$\varepsilon$-private} channel if
 \begin{enumerate}
  \item There is a decryption channel, 
    i.e. there exists a channel $D\colon \{1, \ldots, K\} \tensor
    \density{K} \to \density{H}$ such that for all $k$
    \begin{equation*}
      \dnorm{ D_k \circ E_k - \tidentity{H} } \leq \varepsilon, 
    \end{equation*}
    where the size of the circuit for $D$ is bounded by a polynomial
    in the size of the circuit for $E$.
    \label{enum-e-private-decryptable}

  \item Without the key $k$, the output of $E$ has almost no information
    about the input state, i.e.
    \begin{equation*}
      \dnorm{ \frac{1}{K} \sum_k E_k - \Omega } \leq \varepsilon
    \end{equation*}
    where $\Omega \in \transform{H,K}$ is the
    depolarizing channel that maps all inputs to
    $\identity{K}/\dm{K}$.
    \label{enum-e-private-secure}
  \end{enumerate}
\end{defn}

The use of the diamond norm in this definition is significant: we
require that both conditions hold even for part of an entangled
state.  Specifically, a channel satisfying this definition both preserves
any entanglement with the transmitted state is and remains secure even
in the case that an eavesdropper is entangled with the input.  We use
this strong definition because one of the main results of the paper is a
hardness result: distinguishing secure and insecure encryption remains
hard even when the secure encryption is promised to be secure in this
strong model.  Our hardness result remains true for the weaker model
of private channels using only the trace norm.

This definition is a strengthened
version of the model used by Ambainis and Smith~\cite{AmbainisS04}, who define security
in a similar way, but only against adversaries that are not entangled
with the input state.  Another similar model is considered by
Hayden et al.~\cite{HaydenL+04}, which also does not consider
entangled adversaries, but uses a stronger bound involving the operator
norm.  The hardness result in this paper does \emph{not} apply with
respect to this stronger bound.

Like the perfect encryption schemes found in~\cite{BoykinR03,
  AmbainisM+00}, the encryption scheme constructed by our reduction
uses $2 \log d$ key bits to encrypt a state of dimension $d$.  As
argued (implicitly) in~\cite{HaydenL+04,BarnumC+02} this is essentially optimal: any scheme
using fewer than $2 \log d (1 - \mathrm{poly}(\varepsilon))$ key bits cannot be
secure against entangled adversaries.

\section{Testing Circuits}\label{scn:circuit-testing}

The problem of testing the behaviour of a quantum circuit can be informally stated as: given a circuit $C$, decide between two cases, either the circuit acts like some known circuit $C_0$ on a large subspace of the input, or the circuit acts like some other known circuit $C_1$ on the whole input space.  
We use uniform circuit families $\mathcal{C}_0$ and $\mathcal{C}_1$ since it is important that the circuit $C$, which is provided as input, takes the same number of input and output qubits as the circuits $C_0$ and $C_1$.

\begin{problem}[\prob{Circuit Testing}]\label{prob:ct}
  Let $0 < \varepsilon < 1$, $0< \delta \leq 1$, and $\mathcal{C}_0, \mathcal{C}_1$
  be two uniform families of quantum circuits.
  The input to the problem is a circuit $C
  \in \transform{X,Y}$.  Let $C_0, C_1$ be the circuits drawn from
  $\mathcal{C}_0$ and $\mathcal{C}_1$ that take as input states on
  $\mathcal{X}$.
  The promise problem is to decide between:
  \begin{description}
  \item[Yes:] There exists a subspace $S$ of $\mathcal{X}$ with
    $\dim{S} \geq (\dm{X})^{1 - \delta}$ such that for any reference
    space $\mathcal{R}$ and any $\rho \in
    \density{\mathit{S} \tprod \mathcal{R}}$
    \begin{equation*}
      \tnorm{(C\tprod \tidentity{R})(\rho) - (C_0 \tprod
        \tidentity{R})(\rho)} \leq \varepsilon,
    \end{equation*}
    
  \item[No:] 
    $\dnorm{C - C_1} \leq \varepsilon$, i.e.\ for any reference space
    $\mathcal{R}$ and any $\rho \in \mathcal{H} \tprod \mathcal{R}$
    \begin{equation*}
      \tnorm{(C\tprod \tidentity{R})(\rho) - (C_1 \tprod
        \tidentity{R})(\rho)} \leq \varepsilon.
    \end{equation*}
  \end{description}
  
  \noindent
  When the values of $\varepsilon, \delta, \mathcal{C}_0$, and $\mathcal{C}_1$
  are significant we will refer to this problem as
  $\prob{CT}(\varepsilon, \delta, \mathcal{C}_0, \mathcal{C}_1)$.  
\end{problem}

This problem is well-defined only for families
$\mathcal{C}_0$ and $\mathcal{C}_1$ that do not violate
the promise, i.e. any circuits whose output is not too close
together.  These are the circuits $C_0$ and $C_1$ such that there does
not exist a subspace $T$ of $\mathcal{H}$ of size $\dim{T} > \dm{H}^\delta$ such
that for any input states $\rho \in \density{\mathit{T} \tprod R}$ we
have
$      \tnorm{(C_0\tprod \tidentity{R})(\rho) - (C_1 \tprod
  \tidentity{R})(\rho)} \leq 2 \varepsilon$, i.e. there does not
exist a large subspace of pure states on which $C_0$ and $C_1$ produce
output that is close together.  This condition can be
difficult to verify, but in many applications it is easy to see that
the two circuits do not agree on too many pure states.  The
application of this hardness result to detecting insecure encryption,
for instance, uses $C_0$ as the identity and $C_1$ as the completely
depolarizing channel, and these two circuits never agree on a pure
input state.  We are able to prove that this problem is
\class{QMA}-hard for any circuit families that satisfy this condition.

Notice also the special case $\delta = 1$: here
the \prob{CT} problem asks if there are \emph{any} input states on
which the circuit $C$ behaves like $C_0$ or if it behaves like $C_1$
for all input states.  In this case the problem is well-defined for
any families $\mathcal{C}_0$ and $\mathcal{C}_1$ that do not agree on
the whole space (up to error $2 \varepsilon$).

Concerning the parameters $\varepsilon$ and $\delta$, we may take
$\varepsilon = 2^{-p}$ for any polynomial $p$ using an amplification
result for \class{QMA}~\cite{MarriottW05,KitaevS+02}, and we may take
$\delta$ to be any constant satisfying $0 < \delta \leq 1$.

\subsection{Testing Circuits is QMA-hard}

To show the hardness of \prob{CT} we use a
reduction from an arbitrary problem in \class{QMA}.  This
involves embedding the verifier in a \class{QMA} protocol into an
instance of \prob{CT} with the property that the resulting circuit
runs $C_0$ if the Verifier can be made to accept and runs $C_1$ if the
Verifier cannot be made to accept.

Formalizing this notion, let $L$ be an arbitrary language in
\class{QMA} and let $x$ be an input string.  The \class{QMA}-complete
problem is to decide whether or not $x \in L$.  Since $L \in
\class{QMA}$, there exists some unitary circuit $V : \mathcal{H \tprod
A} \to \mathcal{K}$ which can be constructed efficiently from $x$ such that
if $x \in L$, there exists a pure state $\ket\psi \in \mathcal{H}$
such that measuring the first qubit of $V (\ket\psi \tprod \ket 0)$ results in $\ket 1$
with probability at least $1 - \varepsilon$, whereas
if if $x \not\in L$, then for any state $\ket\psi$ a measurement of $V
(\ket\psi \tprod \ket 0)$ results in $\ket 1$ with probability at most
$\varepsilon$.
By using standard error-reduction techniques for \class{QMA}, we may
take $\varepsilon$ to be negligible in the size of the circuit for
$V$~\cite{MarriottW05, KitaevS+02}.  Notice also that the restriction
to pure witness states $\ket\psi$ can be made without loss of
generality using a convexity argument.

Our goal is to show that \prob{CT} is hard for as many choices of
parameters as possible.  To this end, let $\delta > 0$ be constant 
and let $\mathcal{C}_0$ and $\mathcal{C}_1$ be
uniform circuit families on which the problem
$\prob{CT}(3 \sqrt{\varepsilon},\delta,\mathcal{C}_0,\mathcal{C}_1)$ is well-defined.
These are any families $\mathcal{C}_i = \{ C_{i,n} : n \geq
1 \}$, where the circuit $C_{i,n}$ takes an $n$ qubit input state, such that for any $n$ the circuits $C_{0,n}$ and $C_{1,n}$ do not produce outputs that are not too close together on some large subspace of pure input states.  
In particular, we require that
for all $n$, there does not exist a
subspace $T$ of the $n$-qubit input space $\mathcal{X}$ with $\dim{T}
> \dm{X}^\delta$ such that for any
states $\rho \in \density{\mathit{T} \tprod R}$ we
have
$ \tnorm{(C_0\tprod \tidentity{R})(\rho) - (C_1 \tprod
  \tidentity{R})(\rho)} \leq 6 \sqrt{\varepsilon}$.

The key idea to the reduction is that we construct a circuit that
takes an input state and applies the unitary $V$ to a portion of it,
makes a `copy' of the output bit with a controlled-not gate, and then
applies $V^*$.
If the result of the \class{QMA} protocol would have been the verifier
accepting (i.e. the copy of the output qubit is measured in the $\ket
1$ state), then we apply the circuit $C_0$.  On the other hand, if the
output qubit was in the $\ket 0$ state, we apply the circuit $C_1$.
This results in a circuit that applies $C_0$ if and only the input is
a state the Verifier in the \class{QMA} proof system accepts.  In
order to guarantee that the subspace of accepting states in large
enough, we add dummy input qubits that are ignored by the circuit $V$
but are acted on by either $C_0$ or $C_1$.  By adding enough of these
qubits, we can ensure that if there is at least one state $V$ accepts,
then the result is a large subspace of states that are accepted.

The full construction of the circuit produced by the reduction is
shown in Figure~\ref{fig:reduction}.  
\begin{figure}
  \begin{center}
\setlength{\unitlength}{3947sp}%
\begingroup\makeatletter\ifx\SetFigFont\undefined%
\gdef\SetFigFont#1#2#3#4#5{%
  \reset@font\fontsize{#1}{#2pt}%
  \fontfamily{#3}\fontseries{#4}\fontshape{#5}%
  \selectfont}%
\fi\endgroup%
\begin{picture}(5952,1899)(-839,-1573)
\put(-449,-1336){\makebox(0,0)[lb]{\smash{{\SetFigFont{12}{14.4}{\rmdefault}{\mddefault}{\updefault}{\color[rgb]{0,0,0}$\ket 0$}%
}}}}
\thinlines
{\color[rgb]{0,0,0}\put(-449,-361){\line( 1, 0){600}}
}%
{\color[rgb]{0,0,0}\put(-449,-436){\line( 1, 0){600}}
}%
{\color[rgb]{0,0,0}\put(-449,164){\line( 1, 0){2700}}
}%
{\color[rgb]{0,0,0}\put(-449, 89){\line( 1, 0){2700}}
}%
{\color[rgb]{0,0,0}\put(-449, 14){\line( 1, 0){2700}}
}%
{\color[rgb]{0,0,0}\put(-149,-886){\line( 1, 0){300}}
}%
{\color[rgb]{0,0,0}\put(-149,-736){\line( 1, 0){300}}
}%
{\color[rgb]{0,0,0}\put(-149,-811){\line( 1, 0){300}}
}%
{\color[rgb]{0,0,0}\put(1951,-736){\line( 1, 0){300}}
}%
{\color[rgb]{0,0,0}\put(1951,-811){\line( 1, 0){300}}
}%
{\color[rgb]{0,0,0}\put(1951,-886){\line( 1, 0){300}}
}%
{\color[rgb]{0,0,0}\put(1951,-286){\line( 1, 0){300}}
}%
{\color[rgb]{0,0,0}\put(1951,-361){\line( 1, 0){300}}
}%
{\color[rgb]{0,0,0}\put(1951,-436){\line( 1, 0){300}}
}%
{\color[rgb]{0,0,0}\put(2851,-286){\line( 1, 0){1050}}
}%
{\color[rgb]{0,0,0}\put(2851,-361){\line( 1, 0){1050}}
}%
{\color[rgb]{0,0,0}\put(2851,-436){\line( 1, 0){1050}}
}%
{\color[rgb]{0,0,0}\put(2851,-736){\line( 1, 0){1050}}
}%
{\color[rgb]{0,0,0}\put(2851,-811){\line( 1, 0){1050}}
}%
{\color[rgb]{0,0,0}\put(2851,-886){\line( 1, 0){1050}}
}%
{\color[rgb]{0,0,0}\put(2851,164){\line( 1, 0){1050}}
}%
{\color[rgb]{0,0,0}\put(2851, 89){\line( 1, 0){1050}}
}%
{\color[rgb]{0,0,0}\put(2851, 14){\line( 1, 0){1050}}
}%
{\color[rgb]{0,0,0}\put(4501,-286){\line( 1, 0){600}}
}%
{\color[rgb]{0,0,0}\put(4501,-361){\line( 1, 0){600}}
}%
{\color[rgb]{0,0,0}\put(4501,-436){\line( 1, 0){600}}
}%
{\color[rgb]{0,0,0}\put(4501,164){\line( 1, 0){600}}
}%
{\color[rgb]{0,0,0}\put(4501, 89){\line( 1, 0){600}}
}%
{\color[rgb]{0,0,0}\put(4501, 14){\line( 1, 0){600}}
}%
{\color[rgb]{0,0,0}\put(4501,-736){\line( 1, 0){300}}
}%
{\color[rgb]{0,0,0}\put(4501,-811){\line( 1, 0){300}}
}%
{\color[rgb]{0,0,0}\put(4501,-886){\line( 1, 0){300}}
}%
{\color[rgb]{0,0,0}\put(2551,-1261){\circle*{76}}
}%
{\color[rgb]{0,0,0}\put(4201,-1261){\circle*{76}}
}%
{\color[rgb]{0,0,0}\put(1051,-286){\circle*{76}}
}%
{\color[rgb]{0,0,0}\put(1051,-1261){\circle{76}}
}%
{\color[rgb]{0,0,0}\put(751,-436){\line( 1, 0){600}}
}%
{\color[rgb]{0,0,0}\put(751,-361){\line( 1, 0){600}}
}%
{\color[rgb]{0,0,0}\put(751,-286){\line( 1, 0){600}}
}%
{\color[rgb]{0,0,0}\put(751,-736){\line( 1, 0){600}}
}%
{\color[rgb]{0,0,0}\put(751,-811){\line( 1, 0){600}}
}%
{\color[rgb]{0,0,0}\put(751,-886){\line( 1, 0){600}}
}%
{\color[rgb]{0,0,0}\put(151,-1036){\framebox(600,900){$V$}}
}%
{\color[rgb]{0,0,0}\put(1351,-1036){\framebox(600,900){$V^*$}}
}%
{\color[rgb]{0,0,0}\put(4801,-736){\vector( 0,-1){825}}
}%
{\color[rgb]{0,0,0}\put(2251,-1036){\framebox(600,1350){$U_0$}}
}%
{\color[rgb]{0,0,0}\put(3901,-1036){\framebox(600,1350){$U_1$}}
}%
{\color[rgb]{0,0,0}\put(2551,-1036){\line( 0,-1){225}}
}%
{\color[rgb]{0,0,0}\put(-149,-1261){\line( 1, 0){3300}}
}%
{\color[rgb]{0,0,0}\put(3151,-1486){\framebox(450,450){$X$}}
}%
{\color[rgb]{0,0,0}\put(3601,-1261){\line( 1, 0){1200}}
}%
{\color[rgb]{0,0,0}\put(4201,-1036){\line( 0,-1){225}}
}%
{\color[rgb]{0,0,0}\put(1051,-286){\line( 0,-1){1017}}
}%
\put(-449,-886){\makebox(0,0)[lb]{\smash{{\SetFigFont{12}{14.4}{\rmdefault}{\mddefault}{\updefault}{\color[rgb]{0,0,0}$\ket 0$}%
}}}}
\put(-824,-211){\makebox(0,0)[lb]{\smash{{\SetFigFont{12}{14.4}{\rmdefault}{\mddefault}{\updefault}{\color[rgb]{0,0,0}$\rho \, \Biggl\{$}%
}}}}
{\color[rgb]{0,0,0}\put(-449,-286){\line( 1, 0){600}}
}%
\end{picture}%

  \end{center}
  
  \caption{Circuit output by the reduction.  The circuit $V$ is the
    unitary circuit applied by the \class{QMA} verifier for the
    language $L$.  The circuit $U_i$ is the unitary circuit obtained
    from $C_i$ by removing the gates that introduce ancillary qubits
    and trace out qubits.}
  \label{fig:reduction}
\end{figure}
Before describing the circuit, we fix the notation that we will use.
Let $C_0$ and $C_1$ be circuits drawn from $\mathcal{C}_0$ and
$\mathcal{C}_1$ implementing transformations in $\transform{X,Y}$,
where $\mathcal{X} = \mathcal{F \tprod H}$ and $\mathcal{Y} =
\mathcal{F \tprod K}$, using the spaces $\mathcal{H,K}$ from the
\class{QMA} Verifier for $L$.  Further, we may let $\dm{F} = \ceil{
  \dm{H}^{(1-\delta)/\delta}}$, since we are free to take any
polynomial number of input qubits to $C_0$ and $C_1$.
We also assume without loss of generality that these circuits are implemented
by circuits that apply unitary circuits mapping $\mathcal{X \tprod A}
\to \mathcal{Y \tprod G}$, where the space $\mathcal{A}$ holds any
ancillary qubits needed by the circuit (initially in the $\ket 0$
state) and the space $\mathcal{G}$ represents the qubits traced out at
the end of the computation.  Any mixed-state circuit can be
efficiently transformed into a circuit of this form by moving the
introduction of ancillary qubits to the start of the circuit and
delaying any partial traces to the end of the circuit.  We may also
assume that both the circuit $V$ and the circuits $C_0$ and $C_1$ use
ancillary spaces $\mathcal{A,G}$ of the same size, by simply padding
the circuits using a smaller space with unused ancillary qubits.

Let $C$ be the circuit in Figure~\ref{fig:reduction}.  This circuit takes as input a
quantum state $\rho$ on the space $\mathcal{X} = \mathcal{F \tprod H}$.
This circuit first applies $V$ to the
portion of $\rho$ in $\mathcal{H}$ as well as any needed ancillary
qubits in the space $\mathcal{A}$.  Next, the circuit makes a
classical copy of the `output bit' of $V$, which is used as a control
for the application of the circuits $C_0$ and $C_1$.  The circuit
$V^*$ is then applied, so that the result (provided that $V$ accepts
or rejects with high probability) is a state that is close to the
input state plus a qubit that indicates whether $V$ accepts or rejects
the input state.  The circuit then applies $C_0$ if $V$ accepts and
$C_1$ if $V$ rejects.  These circuits use the same ancillary space
$\mathcal{A}$ as the circuits $V$ and $V^*$, but as long as the
Verifier $V$ either accepts of rejects the input state with high
probability, these ancillary qubits will be returned to the $\ket 0$
state, up to trace distance $2 \sqrt{\varepsilon}$.

Before proving the correctness of the reduction, it will be convenient
to write down some of the states produced by running the constructed
circuit $C$.
Let $\rho$ be an arbitrary input state in $\density{H \tprod F}$ and let
$\ket\psi \in \mathcal{H \tprod F \tprod R}$ be a purification of $\rho$.
The order of the spaces $\mathcal{H}$ and $\mathcal{F}$ has been
changed for notational convenience.
After applying the unitary $V$ to the portion of $\ket\psi$ in
$\mathcal{H}$, the state can be written as
\begin{equation*}
  \ket\phi = (V \tprod \identity{F} \tprod \identity{R}) (\ket\psi \tprod \ket 0),
\end{equation*}
where the $\ket 0$ qubits are in the space $\mathcal{A}$.  Then, there
exist states $\ket{\phi_0}, \ket{\phi_1}$ on all but the first qubit
of $\mathcal{K \tprod F \tprod R}$ such that
\begin{equation*}
  \ket\phi = \sqrt{1-p} \ket 0 \tprod \ket{\phi_0} + \sqrt{p} \ket 1 \tprod \ket{\phi_1}
\end{equation*}
where $0 \leq p \leq 1$ is exactly the probability that the Verifier accepts in the
original protocol on input $\ptr{F} \rho$.  Applying the
controlled-not gate results in
\begin{equation*}
  \ket{\phi'} = \sqrt{1-p} \ket{00} \tprod \ket{\phi_0} + \sqrt{p} \ket{11} \tprod \ket{\phi_1}.
\end{equation*}
We then bound the trace distance of $\ket{\phi'}$ to
$\ket 0 \ket\psi$ and $\ket 1 \ket\psi$.  In the case of $\ket
0 \ket \psi$ we have
\begin{equation}\label{eqn:reduction-close-0}
  \tnorm{ \ketbra{\phi'} - \ketbra{0} \tprod \ketbra{\phi}}
  = 2 \sqrt{ 1 - \abs{ \braket{\phi'}{0 \phi} }^2 }
  = 2 \sqrt{ 1 - (1-p)^2 }
  < 3 \sqrt{p},
\end{equation}
and in the similar case of $\ket 1 \ket \psi$ we have
\begin{equation}\label{eqn:reduction-close-1}
  \tnorm{ \ketbra{\phi'} - \ketbra{1} \tprod \ketbra{\phi}}
  = 2 \sqrt{ 1 - \abs{ \braket{\phi'}{1 \phi} }^2 }
  = 2 \sqrt{ 1 - p^2 }
  < 3 \sqrt{1-p}.
\end{equation}
These two equations show that, when $p$ is close to $0$ or $1$, the
fact that we make a classical copy of the output qubit does not have a
large effect on the state of the system.  (This fact can also be argued
from the Gentle Measurement Lemma~\cite{Winter99}.)
The remainder of the circuit then applies $V^*$ and, depending on the
value of the control qubit, one of $C_0$ and $C_1$.  We consider two
cases, which are argued in two separate propositions.

\begin{proposition}\label{prop:reduction-yes-bound}
  If $x \in L$, then there exists a subspace $S$ of $\mathcal{X}$ with
  $\dim{S} \geq \dm{X}^{1-\delta}$ such that for any reference system
  $\mathcal{R}$ and any $\rho \in S \tprod \mathcal{R}$
  \begin{equation}
    \tnorm{ (C \tprod \tidentity{R})(\ketbra\psi) - (C_0 \tprod
      \tidentity{R})(\ketbra\psi }
    \leq 3 \sqrt{ \varepsilon }.
  \end{equation}
\end{proposition}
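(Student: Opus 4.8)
The plan is to exhibit the subspace $S$ explicitly and then track the state produced by $C$ through the circuit, reusing the vectors computed just before the proposition. Since $x \in L$ there is a witness $\ket{\psi_0} \in \mathcal{H}$ that $V$ accepts with probability at least $1-\varepsilon$, and I would take
\begin{equation*}
  S = \mathcal{F} \tprod \lspan\{\ket{\psi_0}\},
\end{equation*}
the states whose $\mathcal{H}$ register is pinned to the accepting witness while the $\mathcal{F}$ register ranges freely. The first task is the dimension count. We have $\dim S = \dm F = \ceil{\dm H^{(1-\delta)/\delta}} \geq \dm H^{(1-\delta)/\delta}$, so $\dm F^{\delta} \geq \dm H^{1-\delta}$; since $\dm X = \dm F \cdot \dm H$ this rearranges to $\dim S = \dm F \geq (\dm F \cdot \dm H)^{1-\delta} = \dm X^{1-\delta}$, which is exactly the required bound.

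Next I would reduce to pure inputs. Given any $\rho \in \density{S \tprod R}$, its support lies in $S \tprod \mathcal{R}$, so a purification $\ket\psi$ can be taken in $S \tprod \mathcal{R}'$ for an enlarged reference $\mathcal{R}'$ (the purifying register is absorbed into the reference). As the trace norm is monotone nonincreasing under tracing out the purifying system, and the proposition quantifies over all reference systems, it suffices to establish the bound for pure $\ket\psi \in S \tprod \mathcal{R}$. For any such $\ket\psi$ the reduced state $\ptr{F}\rho$ on $\mathcal{H}$ equals $\ketbra{\psi_0}$, so the acceptance probability satisfies $p \geq 1-\varepsilon$, hence $1-p \leq \varepsilon$.

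With $p \geq 1-\varepsilon$ the estimate follows from the vectors already recorded. Equation~\eqref{eqn:reduction-close-1} gives
\begin{equation*}
  \tnorm{\ketbra{\phi'} - \ketbra{1} \tprod \ketbra{\phi}} < 3\sqrt{1-p} \leq 3\sqrt{\varepsilon}.
\end{equation*}
The part of $C$ following the controlled-not is a channel $\Psi$ (apply $V^*$, then $C_0$ or $C_1$ controlled on the copied qubit, then trace out the copy, the garbage $\mathcal{G}$, and the ancilla $\mathcal{A}$), and $(C \tprod \tidentity{R})(\ketbra\psi) = \Psi(\ketbra{\phi'})$. Since trace distance is nonincreasing under $\Psi$, the output stays within $3\sqrt{\varepsilon}$ of $\Psi(\ketbra{1} \tprod \ketbra{\phi})$. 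On this branch the computation is \emph{exact}: $V^*$ restores $\ket\phi$ to $\ket\psi \tprod \ket 0$, returning the ancilla to $\ket 0$, and because the control qubit is $\ket 1$ the circuit applies precisely $C_0$, so $\Psi(\ketbra{1} \tprod \ketbra{\phi}) = (C_0 \tprod \tidentity{R})(\ketbra\psi)$. Combining the two bounds yields $\tnorm{(C \tprod \tidentity{R})(\ketbra\psi) - (C_0 \tprod \tidentity{R})(\ketbra\psi)} < 3\sqrt{\varepsilon}$.

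The step I expect to require the most care is not any single estimate but the exact-recovery bookkeeping on the accepting branch: verifying that $V^*$ genuinely undoes $V$ and returns the shared ancilla space $\mathcal{A}$ to $\ket 0$, and that the single copied control qubit cleanly selects $C_0$ (and not $C_1$, which shares the same ancillas) so that $\Psi(\ketbra{1}\tprod\ketbra{\phi})$ is \emph{literally} the $C_0$ output rather than merely close to it. The pure-state reduction under the subspace constraint is the other point that must be stated carefully, since naive purification could move the state outside $S$ were the support argument not used.
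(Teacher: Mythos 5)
Your proof is correct and takes essentially the same route as the paper's: the same subspace $S$ (the accepting witness on $\mathcal{H}$ tensored with all of $\mathcal{F}$), the same dimension count from $\dm{F} = \ceil{\dm{H}^{(1-\delta)/\delta}}$, and the same core estimate combining Equation~\eqref{eqn:reduction-close-1} with monotonicity of the trace norm under the rest of the circuit, together with the exact evaluation on the $\ket{1}$ branch. The only differences are presentational: you make explicit the purification/support reduction and the ancilla-recovery bookkeeping that the paper leaves implicit (and you correctly apply $U_0$, $C_0$ on the accepting branch, where the paper's text has a typographical slip writing $U_1$, $C_1$).
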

\begin{proof}
  If $x\in L$, then there is some input state $\ket\psi$ on which the
  Verifier accepts with probability $p \geq 1 - \varepsilon$.
  Applying the remainder of the circuit, up to the partial trace, to
  the state $\ket 1 \ket\phi$ results in the state $\ket 1 \tprod (U_1
  \tprod \identity{R})( \ket \psi \tprod \ket 0)$.  Tracing out the
  space $\mathcal{G}$ as well as the copy of the output qubit, results
  in exactly the state $\ptr{G} (U_1 \tprod
  \identity{R})(\ketbra{\psi} \tprod \ketbra{0}) (U_1^* \tprod
  \identity{R}) = (C_1 \tprod \tidentity{R})(\ketbra\psi)$.  This is
  not quite equal to the output of the constructed circuit $C$,
  however, as in this evaluation we have replaced the state
  $\ket{\phi'}$ with the state $\ket 1 \ket \phi$.  However, using the
  monotonicity of the trace norm under quantum operations, the
  remainder of the circuit cannot increase the norm of the two states,
  and so applying Equation~\eqref{eqn:reduction-close-1}, we have
  \begin{equation}\label{eqn:reduction-yes-bound}
    \tnorm{ (C \tprod \tidentity{R})(\ketbra\psi) - (C_0 \tprod
      \tidentity{R})(\ketbra\psi }
    \leq 3 \sqrt{1 - p} \leq 3 \sqrt{ \varepsilon }.
  \end{equation}

  It remains to show that this occurs on a large subspace of
  $\mathcal{X} = \mathcal{H} \tprod \mathcal{F}$.  Since we have
  assumed the Verifier $V$ accepts with high probability on the state
  $\ket\psi$, this implies that there is some state $\ket \gamma \in
  \mathcal{H}$ for which $V$ also accepts with probability at least $1
  - \varepsilon$, as $V$ ignores the qubits in $\mathcal{F}$.  
  Then, since $\ket\psi$ was arbitrary,
  Equation~\eqref{eqn:reduction-yes-bound} also applies to $\ket\gamma
  \tprod \ket\xi \in \mathcal{H \tprod F}$ for \emph{any} state
  $\ket\xi \in \mathcal{F}$.  The subspace $S$ of states whose reduced
  state on $\mathcal{H}$ is equal to $\ket\gamma$ has dimension
  $\dm{F}$.  Then, since $\dm{F} = \ceil{
    \dm{H}^{(1-\delta)/\delta}}$, we have
  \begin{equation*}
    \dm{X} = \dm{H \tprod F} = \dm{H} \dm{F}
    \leq \dm{F}^{\delta/(1-\delta)} \dm{F}
    = \dm{F}^{1/(1-\delta)},
  \end{equation*}
  which implies that $\dm{F} \geq \dm{X}^{1-\delta}$, as required.
  Thus, when $x \in L$ the Verifier $V$ can be made to accept, and
  so the result is a yes instance of \prob{CT}.
\end{proof}

The remaining case is when $x \not\in L$, i.e.\ the Verifier $V$
rejects every state with high probability.  This proof of this case is extremely
similar to the previous one.
\begin{proposition}\label{prop:reduction-no-bound}
  If $x \not\in L$, then for any reference system
  $\mathcal{R}$ and any $\rho \in \mathcal{X} \tprod \mathcal{R}$,
   $\dnorm{ C - C_1}
    \leq 3 \sqrt{ \varepsilon }.$
\end{proposition}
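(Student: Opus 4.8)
The plan is to mirror the proof of Proposition~\ref{prop:reduction-yes-bound}, interchanging the roles of the accepting and rejecting branches. The essential new input is the soundness of the \class{QMA} protocol: since $x \not\in L$, the Verifier accepts \emph{every} pure state of $\mathcal{H}$ with probability at most $\varepsilon$, and by convexity this bound also holds for mixed inputs. Consequently, for an arbitrary input $\rho$ on $\mathcal{X \tprod R}$ the acceptance probability $p$ computed on the reduced state $\ptr{F}\rho$ satisfies $p \leq \varepsilon$, uniformly over all inputs and all reference systems.

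Next I would fix an arbitrary purification $\ket\psi \in \mathcal{H \tprod F \tprod R}$ and trace it through the circuit exactly as in the states $\ket\phi$ and $\ket{\phi'}$ written down before Proposition~\ref{prop:reduction-yes-bound}, now with $p \leq \varepsilon$. Because the rejecting branch (small $p$) now dominates, I would use Equation~\eqref{eqn:reduction-close-0} rather than~\eqref{eqn:reduction-close-1}, which bounds $\tnorm{\ketbra{\phi'} - \ketbra 0 \tprod \ketbra\phi}$ by $3\sqrt{p}$. The remainder of the circuit applies $V^*$ and then, on the copy-$\ket 0$ branch, the unitary $U_1$; running this remaining channel on $\ket 0 \tprod \ket\phi$ first returns $V^*\ket\phi = \ket\psi \tprod \ket 0$ and then applies $U_1$, producing after the partial traces exactly $(C_1 \tprod \tidentity{R})(\ketbra\psi)$. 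Since the remainder is a quantum channel, monotonicity of the trace norm gives
\[
\tnorm{(C \tprod \tidentity{R})(\ketbra\psi) - (C_1 \tprod \tidentity{R})(\ketbra\psi)} \leq \tnorm{\ketbra{\phi'} - \ketbra 0 \tprod \ketbra\phi} < 3\sqrt{p} \leq 3\sqrt{\varepsilon}.
\]

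Finally, I would promote this per-state bound to the diamond norm. Because $C$ and $C_1$ are both channels, $C - C_1$ is a difference of completely positive maps, so by the characterization recalled in Section~\ref{scn:prelim} the value $\dnorm{C - C_1}$ is attained on a pure input of $\mathcal{X \tprod R}$; the displayed inequality holds for every such $\ket\psi$, so $\dnorm{C - C_1} \leq 3\sqrt{\varepsilon}$. I expect the only real subtlety to be ensuring the bound is genuinely uniform over all (possibly entangled) inputs: this is precisely where soundness together with convexity supplies $p \leq \varepsilon$ for every reduced state, and where the pure-state characterization of the diamond norm lets a calculation on pure states suffice. Everything else is the same bookkeeping as in the $x \in L$ case.
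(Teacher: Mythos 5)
Your proposal is correct and follows essentially the same route as the paper's proof: use soundness to get $p \leq \varepsilon$ for every input, apply Equation~\eqref{eqn:reduction-close-0} together with monotonicity of the trace norm under the remainder of the circuit to bound the per-state distance by $3\sqrt{p}$, and then pass to the diamond norm via the pure-state characterization. The extra details you supply (convexity for mixed/entangled inputs, and the fact that the diamond norm of a difference of completely positive maps is attained on pure states) are exactly the points the paper leaves implicit, relying on its earlier remarks in Section~\ref{scn:prelim} and in the setup preceding Proposition~\ref{prop:reduction-yes-bound}.
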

\begin{proof}
  This proof is similar to the proof of
  Proposition~\ref{prop:reduction-yes-bound}.  
  If $x\not\in L$, then $V$ accepts any
  state $\ket\psi$ with probability $p \leq \varepsilon$.  If we
  consider applying $V^*$ and the remainder of the circuit to the state
  $\ket 0 \ket\phi$, the result is $(C_1 \tprod
  \tidentity{R})(\ketbra\psi)$, similarly to the previous case.  Once
  again, we do not run this part of the circuit on this state, but the
  state $\ket{\phi'}$ which is very close to it.  Once again we can
  apply the monotonicity of the trace norm under quantum operations and Equation~\eqref{eqn:reduction-close-0} to show that 
  \begin{equation*}
    \tnorm{ (C \tprod \tidentity{R})(\ketbra\psi) - (C_1 \tprod
      \tidentity{R})(\ketbra\psi }
    \leq 3 \sqrt{p} \leq 3 \sqrt{ \varepsilon }.
  \end{equation*}
  Since this equation applies for all reference systems $\mathcal{R}$
  and all states $\ket\psi$, this proves that if $x \not\in L$, then we
  have $\dnorm{C - C_1} \leq 3 \sqrt{\varepsilon}$.
\end{proof}

Taken together, these two proposition prove the hardness of the
\prob{CT} problem.  Note once again that in order for the \prob{CT}
problem to be well defined (i.e. the set of `yes' instances does not
intersect the set of `no' instances) we require that circuits from the
two families are not too close together for any large subspaces of pure
input states.  See the discussion following Problem~\ref{prob:ct} for
a technical condition that is equivalent to this requirement.
\begin{theorem}\label{thm:cpt-hardness}
  \prob{CT}($\varepsilon,\delta,\mathcal{C}_0,\mathcal{C}_1$) is \class{QMA}-hard for any $0 < \varepsilon < 1$ such that $\varepsilon \geq 2^{-p}$ for some polynomial $p$, any constant $0 < \delta \leq 1$, and any uniform circuit families $\mathcal{C}_0$, $\mathcal{C}_1$ for which the problem is well-defined.
\end{theorem}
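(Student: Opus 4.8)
The plan is to assemble the pieces already built above: the reduction of Figure~\ref{fig:reduction} together with Propositions~\ref{prop:reduction-yes-bound} and~\ref{prop:reduction-no-bound}. Fix target parameters $\varepsilon, \delta, \mathcal{C}_0, \mathcal{C}_1$ meeting the hypotheses, let $L$ be an arbitrary \class{QMA} language, and let $x$ be an input. First I would apply \class{QMA} error reduction to construct a unitary verifier circuit $V$ whose error $\eta$ is small enough that $3\sqrt{\eta} \leq \varepsilon$; since $\varepsilon \geq 2^{-p}$ it suffices to make $\eta \leq \varepsilon^2/9$, and because $\varepsilon^2/9 \geq 2^{-2p}/9$ this bound is achievable in time polynomial in the verifier size by~\cite{MarriottW05, KitaevS+02}. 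With $V$ fixed I would build $C$ exactly as in the reduction, taking the circuits $C_0 \in \mathcal{C}_0$ and $C_1 \in \mathcal{C}_1$ that act on $\mathcal{X} = \mathcal{F} \tprod \mathcal{H}$ with $\dm{F} = \ceil{\dm{H}^{(1-\delta)/\delta}}$. Since $V$ is efficiently generated from $x$ and the families $\mathcal{C}_0, \mathcal{C}_1$ are uniform, the map $x \mapsto C$ is computable in polynomial time.

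Correctness then follows directly from the two propositions. If $x \in L$, Proposition~\ref{prop:reduction-yes-bound} yields a subspace $S \subseteq \mathcal{X}$ of dimension at least $\dm{X}^{1-\delta}$ on which $C$ agrees with $C_0$ to within $3\sqrt{\eta} \leq \varepsilon$, so $C$ is a yes instance of $\prob{CT}(\varepsilon, \delta, \mathcal{C}_0, \mathcal{C}_1)$. If $x \notin L$, Proposition~\ref{prop:reduction-no-bound} gives $\dnorm{C - C_1} \leq 3\sqrt{\eta} \leq \varepsilon$, so $C$ is a no instance. The well-definedness hypothesis on $(\mathcal{C}_0, \mathcal{C}_1)$---that no subspace $T$ of $\mathcal{X}$ with $\dim{T} > \dm{X}^{\delta}$ makes $C_0$ and $C_1$ agree to within $2\varepsilon$---is exactly what forces the yes and no cases to be disjoint, so the answer for the instance $C$ records whether $x \in L$. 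This is a polynomial-time many-one reduction from an arbitrary \class{QMA} language, establishing the hardness claim.

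Because the real work is done in the reduction and its two propositions, the theorem is essentially a wrap-up, and the only points needing care are bookkeeping. The first is the error matching above: I must check that the $3\sqrt{\eta}$ bounds can be driven below the prescribed $\varepsilon$ while $\eta$ remains an admissible \class{QMA} error, which is precisely where $\varepsilon \geq 2^{-p}$ enters. The second is the dimension count $\dm{F} = \ceil{\dm{H}^{(1-\delta)/\delta}}$, which must simultaneously keep the padding polynomial---so the reduction stays efficient---and give $\dm{F} \geq \dm{X}^{1-\delta}$ so that the accepting subspace is as large as the \textbf{Yes} case requires; both hold because $\delta$ is a fixed positive constant, making $(1-\delta)/\delta$ a fixed exponent and $\dm{F}$ polynomial in $\dm{H}$. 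I expect no conceptual obstacle beyond verifying these estimates and the uniformity of the construction.
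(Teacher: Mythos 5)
Your proposal is correct and follows essentially the same route as the paper: correctness is delegated to Propositions~\ref{prop:reduction-yes-bound} and~\ref{prop:reduction-no-bound}, and the only remaining check is that the padding $\dm{F} = \ceil{\dm{H}^{(1-\delta)/\delta}}$ keeps the reduction polynomial-time for constant $\delta$. Your explicit matching of the amplified verifier error $\eta \leq \varepsilon^2/9$ to the target parameter (using $\varepsilon \geq 2^{-p}$) is a point the paper handles only implicitly in the discussion preceding the propositions, and making it explicit is a small improvement in rigor rather than a different approach.
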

\begin{proof}
  The correctness of the reduction is argued in In Propositions~\ref{prop:reduction-yes-bound}
  and~\ref{prop:reduction-no-bound}.
  It remains only to verify that the reduction can be
  performed efficiently.  To see that the reduction can be performed in
  time polynomial in the size of the input $x$ (which is at most
  polynomially smaller than the size of the circuit
  $V$: the only part of the reduction that can cause a problem the size of
  the space $\mathcal{F}$, since we have taken $\dm{F} =
  \ceil{\dm{H}^{(1-\delta)/\delta}}$.  This implies that the space
  $\mathcal{F}$ requires a factor of $(1-\delta) / \delta$ more qubits
  than the space $\mathcal{H}$, which is linear in the input dimension so long as
  $\delta$ is a constant.  This implies
  that the reduction can be performed in (classical deterministic)
  polynomial time.
\end{proof}

\subsection{Applications}

In this section we apply Theorem~\ref{thm:cpt-hardness} to reprove the hardness of some of the circuit problems that are known to be hard for \class{QMA} as well as to show the \class{QMA}-hardness of some new circuit problems.

The first problem we consider is a slightly generalized version of the
problem~\prob{Non-identity Check} studied by Janzing, Wocjan, and
Beth~\cite{JanzingW+05}, who show that it is \class{QMA}-complete.
Our version of the problem differs in that
we allow the input circuit to be a mixed-state circuit.  We do still
require, however, that if the circuit does not act like the identity
everywhere, then it acts like some efficient unitary circuit $U$ on some
input state for which $U$ is far from the identity.  This requirement
is not needed to prove that this problem is hard, but it is hard to
see how to put the problem into~\class{QMA} without it.
\begin{problem}[\prob{Mixed Non-identity Check}~\cite{JanzingW+05}]\label{prob:nonidentity}
  Let $0 < \varepsilon < 1$.  On input $C$, a circuit in $\in \transform{X,X}$,
  the promise problem is to decide between:
  \begin{description}
  \item[Yes:]  $\dnorm{C - \tidentity{}} \geq 2 - \varepsilon$ and there
    exists an efficient unitary $U$ such that on some pure state
    $\ket\psi \in \mathcal{X}$ we have $\tnorm{ C(\ketbra{\psi}) - U
      \ketbra{\psi} U^*} \leq \varepsilon$ and $\tnorm{ U
      \ketbra{\psi} U^* - \ketbra{\psi}} \geq 2 - \varepsilon$.

  \item[No:] 
    $\dnorm{C - \tidentity{}} \leq \varepsilon$.
  \end{description}
\end{problem}
\noindent The \class{QMA}-hardness of this problem follows from
Theorem~\ref{thm:cpt-hardness} and the fact that
\prob{CT}$(\varepsilon,1,\mathcal{U},\id)$ is a special case of the problem, where
$\mathcal{U}$ is any uniform family of quantum circuits that are not
close to the identity (one such example is the family of circuits that apply Pauli $X$ to
the first input qubit).

The next problem we consider is the problem of detecting whether a (mixed-state) circuit is close to an isometry, which was shown to be \class{QMA}-complete in~\cite{Rosgen10non-isometry}.  This can be formalized as the problem of detecting if there is a pure input state one which the output state is highly mixed.
\begin{problem}[\prob{Non-isometry}~\cite{Rosgen10non-isometry}]\label{prob:noniso}
  Let $0 < \varepsilon < 1/2$.  On input a circuit $C \in
  \transform{X,Y}$ the promise
  problem is to decide between:
  \begin{description}
    \item[Yes:] There exists $\ket\psi \in \mathcal{X}$ such that $\opnorm{(\Phi
        \tprod \tidentity{X})(\ketbra{\psi})} \leq \varepsilon$,
    \item[No:] For all $\ket\psi \in \mathcal{X}$, $\opnorm{(\Phi
        \tprod \tidentity{X})(\ketbra{\psi})} \geq 1 - \varepsilon$.
  \end{description}
\end{problem}
\noindent The \class{QMA}-hardness of this problem follows from Theorem~\ref{thm:cpt-hardness}, since \prob{CT}$(\varepsilon,1,\Omega,\id)$ is a special case.

 We can also apply Theorem~\ref{thm:cpt-hardness} to show the hardness of the problem of determining if a channel has a pure fixed point.  This problem can be stated as follows. 
\begin{problem}[\prob{Pure Fixed Point}]\label{prob:pfp}
  Let $0 < \varepsilon < 1$.  On input a circuit $C \in
  \transform{X,X}$ the promise
  problem is to decide between:
  \begin{description}
  \item[Yes:] There exists $\ket\psi \in \mathcal{X}$ such that $\tnorm{C(\ketbra{\psi}) - \ketbra{\psi}} \leq \varepsilon$
  \item[No:] For any $\ket\psi \in \mathcal{X}$, $\tnorm{C(\ketbra{\psi}) - \ketbra{\psi}} \geq 2 - \varepsilon$
  \end{description}
\end{problem}    
\noindent The \class{QMA}-hardness of this problem follows from the fact that \prob{CT}$(\varepsilon,1,\id,\Omega)$ is a special case. 

A related problem is determining if the minimum output entropy of a quantum channel is small.  Related results can be found in~\cite{BeigiS07}, though the model used there seems to be incompatible with the model used in the present paper.
In order to define this problem, let $S_{\min}(C) = \min_{\rho} S(C(\rho))$ be the minimum output entropy of the channel $C$ (where $S$ is the von Neumann entropy).
\begin{problem}[\prob{Minimum Output Entropy}]\label{prob:moe}
  Let $0 < \varepsilon < 1/2$.  On input a circuit $C \in
  \transform{X,X}$ the promise
  problem is to decide between:
  \begin{description}
  \item[Yes:] $S_{\min}(C) \leq \varepsilon \log \dm{X}$
  \item[No:] $S_{\min}(C) \geq (1 - \varepsilon) \log \dm{X}$
  \end{description}
\end{problem}
\noindent As in the previous case, the \prob{QMA}-hardness of this problem follows from Theorem~\ref{thm:cpt-hardness} and the fact that \prob{CT}$(\varepsilon/2,1,\id,\Omega)$ is a special case.  The $\log \dm{X}$ terms in the statement of the problem are due to the use of Fannes Inequality~\cite{Fannes73} to transform trace distance bounds to entropy bounds.

\section{Detecting Insecure Encryption}\label{scn:insecure-encryption}

In this section we consider the problem of detecting when a two-party symmetric key quantum encryption system is insecure.
We first use Theorem~\ref{thm:cpt-hardness} to show that this problem is hard, and then give a \class{QMA}-verifier to show that it is \class{QMA}-complete.
The problem can be defined as follows.

\begin{problem}[\prob{Detecting Insecure Encryption}]\label{prob:insecure}
 For $0 < \varepsilon < 1$ and $0 < \delta \leq 1$ an instance of the problem consists of
 a quantum circuit $E$ that takes as input a quantum state as well as
 a $m$ classical bits, such that for each $k \in \{0,1\}^m$ the
 circuit implements a quantum channel $E_k \in \transform{H,K}$ with
 $\dm{K} \geq \dm{H}$.  The
 promise problem is to decide between:
  \begin{description}
    \item[Yes:] There exists a subspace $S$ of $\mathcal{H}$ with
      $\dim{S} \geq \dm{H}^{1-\delta}$ such that for any reference
      space $\mathcal{R}$, any $\rho \in
      \density{\mathit{S} \tprod R}$, and
      any key $k$,
      $\tnorm{(E_k \tprod \tidentity{R})(\rho) - \rho} \leq \varepsilon.$

    \item[No:] $E$ is an $\varepsilon$-private channel, i.e. 
      $\smalldnorm{ \Omega - \frac{1}{2^m} \sum_{k \in \{0,1\}^m} E_k } \leq \varepsilon, $
      where $\Omega$ is the completely depolarizing channel in
      $\transform{H,K}$, and there exists an polynomial-size quantum
      circuit $D$ such that for all $k$ we have $
      \dnorm{D_k \circ E_k - \tidentity{H}} \leq \varepsilon$.
  \end{description}
  When the values of $\varepsilon$ and $\delta$ are significant, we will refer to this
  problem as $\prob{DI}_{\varepsilon,\delta}$.  
\end{problem}

Informally, this is the problem of distinguishing two cases: either
the channel fails to encrypt a large subspace of the input qubits (for
any key), or the channel is very close to a perfect encryption
channel.  

\begin{theorem}\label{thm:hardness}
  $\prob{DI}_{\varepsilon,\delta}$ is \class{QMA}-hard for all $0 <
  \varepsilon < 1/2$ and all $0 < \delta \leq 1$.  
\end{theorem}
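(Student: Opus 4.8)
The plan is to reduce from an arbitrary language $L \in \class{QMA}$ by reusing the circuit constructed in the hardness proof of \prob{CT} (Figure~\ref{fig:reduction}) with the choices $\mathcal{C}_0 = \id$ and $\mathcal{C}_1 = \Omega$, modified so that the ``reject'' branch carries a classical key. Concretely, recall that the completely depolarizing channel can be written as a uniform average of key-specified Paulis, $\frac{1}{2^m}\sum_k \Omega_k = \Omega$, as in Equation~\eqref{eqn:depol-key-average} and Figure~\ref{fig:depol}. I would take the reduction circuit $C$ with $U_1$ implementing $\Omega$ through this Pauli construction, and then \emph{expose} the $m = 2\log\dm{X}$ control qubits that randomize the Paulis as the classical key, obtaining an encryption circuit $E$ with message space $\mathcal{X}$ (playing the role of $\mathcal{H}$ in Problem~\ref{prob:insecure}) whose branch $E_k$ equals $C$ except that its reject branch applies the fixed Pauli string $\Omega_k$ rather than the full channel $\Omega$. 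By construction averaging over a uniform key recovers $\frac{1}{2^m}\sum_k E_k = C$, and the reduction is efficient: the only overhead is the $(1-\delta)/\delta$ blow-up of the dummy register $\mathcal{F}$ already handled in Theorem~\ref{thm:cpt-hardness}.

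For the completeness (Yes) case $x \in L$, I would rerun the argument of Proposition~\ref{prop:reduction-yes-bound}. On the accepting subspace $S$ of dimension at least $\dm{X}^{1-\delta}$ the copied output qubit is $\ket 1$ up to error, so the reject branch is never triggered and $E_k$ applies only $U_0 = \id$; the key $k$ acts solely through that unused branch, so the bound $\tnorm{(E_k \tprod \tidentity{R})(\rho) - \rho} \le 3\sqrt{\varepsilon}$ holds simultaneously for \emph{every} key $k$ and every $\rho$ supported on $S \tprod \mathcal{R}$. Choosing the \class{QMA} error small enough that $3\sqrt{\varepsilon}$ is below the target error gives exactly the Yes instance of $\prob{DI}$.

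The soundness (No) case $x \notin L$ is where the real work lies. Here $V$ rejects every input, so Proposition~\ref{prop:reduction-no-bound}, applied with the reject branch instantiated to the single Pauli $\Omega_k$ rather than the averaged $\Omega$, shows $\dnorm{E_k - \Omega_k} \le 3\sqrt{\varepsilon}$ for each \emph{individual} $k$ (not merely on average), since on every input the circuit collapses to $\Omega_k$ with the ancillas returned to $\ket 0$. This one estimate yields both halves of the $\varepsilon$-private definition. Security follows by averaging: $\dnorm{\Omega - \frac{1}{2^m}\sum_k E_k} = \dnorm{\Omega - C} \le 3\sqrt{\varepsilon}$, using $\frac{1}{2^m}\sum_k E_k = C$ together with the \prob{CT} No-bound $\dnorm{C - \Omega} \le 3\sqrt{\varepsilon}$. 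Decryptability follows because each $\Omega_k$ is inverted by the efficient Pauli circuit $D_k = \Omega_k^{*}$, whence $\dnorm{D_k \circ E_k - \tidentity{X}} = \dnorm{D_k \circ E_k - D_k \circ \Omega_k} \le \dnorm{E_k - \Omega_k} \le 3\sqrt{\varepsilon}$ by monotonicity under the channel $D_k$.

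I expect the main obstacle to be precisely this decryptability requirement, and it is what forces the keyed construction. The naive choice $E_k = C$ (independent of $k$) would satisfy the security condition in the No case but could never be decryptable, since $C \approx \Omega$ destroys all information about the input. Exposing the averaging randomness of $\Omega$ as the key repairs this, because each individual branch $\Omega_k$ is an invertible Pauli encryption while the uniform average over keys is still the information-destroying channel $\Omega$. After rescaling so that $3\sqrt{\varepsilon}$ is at most any desired error $0 < \varepsilon < 1/2$ (possible by \class{QMA} amplification, as in Theorem~\ref{thm:cpt-hardness}), the reduction maps $x \in L$ to Yes instances and $x \notin L$ to No instances of $\prob{DI}_{\varepsilon,\delta}$, establishing \class{QMA}-hardness.
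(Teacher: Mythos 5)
Your proposal is correct and follows essentially the same route as the paper: the paper likewise instantiates the reduction with the key-discarding identity $\id_k$ versus the keyed Pauli family $\mathcal{E}_k = \{\Omega_{k,n}\}$ (i.e.\ the key exposed as a classical input), derives the per-key bound $\dnorm{E_k - \Omega_k} \leq \varepsilon$ in the No case, obtains security by averaging over keys, and obtains decryptability from $\dnorm{\Omega_k^{-1}\circ E_k - \tidentity{}} \leq \dnorm{E_k - \Omega_k}$, exactly as you do. The only cosmetic difference is that the paper packages the construction as a keyed variant of \prob{CT} whose promise it then weakens, whereas you re-run Propositions~\ref{prop:reduction-yes-bound} and~\ref{prop:reduction-no-bound} directly on the keyed circuit; the substance of the argument is the same.
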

\begin{proof}
  Let $\mathcal{E}_k = \{ \Omega_{k,n} \}$ where $\Omega_{k,n}$ is the
  $n$-qubit channel that applies the $k$th Pauli operator to the input
  qubits.  As in Equation~\eqref{eqn:depol-key-average} averaging over
  all over all keys $k$ results in the completely depolarizing channel
  on $n$ qubits.  Then, Theorem~\ref{thm:cpt-hardness} implies that  
  \prob{CT}$(\varepsilon, \delta, \id_k, \mathcal{E}_k)$ is hard for \class{QMA}, 
  where $\id_k$ is the channel that
  discards the key $k$ and does nothing to the quantum input.

  The problem \prob{CT}$(\varepsilon, \delta, \id_k, \mathcal{E}_k)$
  involves a slight redefinition of the problem \prob{CT} to include
  both a quantum input, as well as a classical input $k$.  This can be
  done without difficulty by including the classical input as part of
  the quantum input (to circuits in the families $\id_k$ and
  $\mathcal{E}_k$) that is immediately measured in the computational
  basis (and in the case of $\id_k$, discarded).  The problem
  \prob{CT}$(\varepsilon, \delta, \id_k, \mathcal{E}_k)$ remains hard
  after this modification.

  The \class{QMA}-hardness of $\prob{DI}_{\varepsilon,\delta}$ then
  follows immediately from the fact that the problem of detecting
  insecure encryption is simply \prob{CT}$(\varepsilon, \delta, \id_k,
  \mathcal{E}_k)$ with a weakened promise.  
  Since the sets of `yes' instances of the two
  problems are identical, we need only verify the `no' instances.  Let
  the circuit $C \in \transform{H, K}$ be a `no' instance of
  \prob{CT}$(\varepsilon, \delta, \id_k, \mathcal{E}_k)$ and let
  $C_k(\cdot) = C( \ketbra{k} \tprod \cdot)$ be the circuit defined by
  hardcoding the input in the `key' portion of the input space.  Then,
  for any input $\rho$ and any key $k$, we have $\dnorm{C_k -
    \Omega_k} \leq \varepsilon$, since this follows for the versions
  of these circuits without a hardcoded key (which is just a
  restriction of the input space).  From this equation, the triangle
  inequality implies that
  \begin{equation*}
    \dnorm{ \Omega - \frac{1}{2^m} \sum_k C_k }
    \leq \frac{1}{2^m} \sum_k \dnorm{ \Omega_k - C_k }
    \leq \varepsilon,
  \end{equation*}
  which is the property required by `no' instances of \prob{DI}.  To
  see further that the output of $C_k$ can be decrypted with
  knowledge of $k$, observe that $\Omega^{-1}_k \circ \Omega_k =
  \tidentity{}$, and so it follows that
  \begin{equation*}
    \dnorm{ \Omega^{-1}_k \circ C_k - \tidentity{}}
    \leq \dnorm{ \Omega^{-1}_k \circ C_k - \Omega^{-1}_k \circ
      \Omega_k } + \dnorm{ \Omega^{-1}_k \circ \Omega_k -
      \tidentity{}} \\
    \leq \dnorm{C_k - \Omega_k } \leq \varepsilon,    
  \end{equation*}
  which implies that instances of \prob{CT}$(\varepsilon, \delta,
  \id_k, \mathcal{E}_k)$ are equivalent to instances of
  $\prob{DI}_{\varepsilon,\delta}$, as required.
\end{proof}

\subsection{QMA Protocol}

To test the security of an encryption system in \class{QMA} the
Verifier will need a tool to compare two quantum states.  Such a tool
is provided by the swap test, introduced in~\cite{BuhrmanC+01}, though
here we essentially use it to test the purity of quantum states as is
done in~\cite{EkertA+02}.

The swap test is an efficient procedure that makes the projective
measurement onto the symmetric and antisymmetric subspaces of a
bipartite space.  Let $W$ be the swap operation on $\mathcal{H \tprod
  H}$, i.e. $W (\ket{\psi} \tprod \ket{\phi}) = \ket{\phi} \tprod
\ket{\psi}$ for all $\ket\psi, \ket\phi \in \mathcal{H}$.  The swap
test performs the two-outcome projective measurement given by the projection
onto the symmetric subspace, given by $(\identity{H \tprod H} + W)/2$,
and the projection onto the antisymmetric subspace, given by
$(\identity{H \tprod H} - W)/2$.  

Given two pure states $\ket\psi,
\ket\phi$, the swap test returns the symmetric outcome with
probability $(1 + \abs{ \braket{\psi}{\phi} }^2)/2$.
When applied to mixed states $\rho,\sigma$, the swap test can also be
used to estimate the overlap, as the result is symmetric with
probability $(1 + \tr( \rho \sigma ))/2$, as observed
in~\cite{EkertA+02}.  Notice that this implies that the swap test can
be used to estimate the purity of a state, given two copies.

The idea behind the protocol is that if the encryption system
specified by $E$ is insecure then, regardless of the key chosen, it
acts trivially on some subspace of the input states.  In this case a
proof can consist simply of two copies of some pure state in this
subspace.  The Verifier runs $E$ on both of these states in parallel
and tests that they have not been changed by performing the swap
test.  In the case that the circuit
is insecure, this proof state will cause the Verifier to obtain the
symmetric outcome of the swap test with probability approaching 1.
Note that this protocol does \emph{not} check that the input state
is unchanged, only that the output states of the two applications of
$E$ are (close to) the same pure state.

If $E$ represents a secure encryption system, then without knowledge
of the key, the output of $E$ is close to the completely mixed
state, regardless of the input state.  In this case the Verifier
performs the swap test on two highly mixed states and the result is
antisymmetric with probability close to 1/2.

This protocol can be formalized as follows.  A circuit implementation
can be found in Figure~\ref{fig:protocol}.
\begin{proto}\label{proto:di-qma}
  On input a circuit $E \colon \{1, \ldots, K \} \tensor \density{H}
  \to \density{K}$, an instance of $\prob{DI}_{\varepsilon,\delta}$, as
  well as a quantum proof $\ket\phi$ in $\density{(H \tensor R)^{\mathrm{\tprod
  2}}}$ (where $\dm{R} = \dm{H}$), the
  Verifier performs the following protocol.
  \begin{enumerate}
    \item The Verifier generates random keys $k_1, k_2 \in \{1, \ldots,
      K\}$.
    \item The Verifier applies $(E_{k_1} \tprod \tidentity{R}) \tprod
      (E_{k_2} \tprod \tidentity{R})$ to the state
      $\ket\phi$.
    \item The Verifier applies the swap test to the resulting state,
      accepting if the outcome is symmetric.
  \end{enumerate}
\end{proto}
\begin{figure}
  \begin{center}
   \setlength{\unitlength}{3947sp}%
\begingroup\makeatletter\ifx\SetFigFont\undefined%
\gdef\SetFigFont#1#2#3#4#5{%
  \reset@font\fontsize{#1}{#2pt}%
  \fontfamily{#3}\fontseries{#4}\fontshape{#5}%
  \selectfont}%
\fi\endgroup%
\begin{picture}(3027,2949)(1186,-2698)
\put(1201,-961){\makebox(0,0)[lb]{\smash{{\SetFigFont{12}{14.4}{\rmdefault}{\mddefault}{\updefault}{\color[rgb]{0,0,0}$\ket 0$}%
}}}}
\thinlines
{\color[rgb]{0,0,0}\put(2251,-2236){\line( 1, 0){300}}
}%
{\color[rgb]{0,0,0}\put(2251,-2386){\line( 1, 0){300}}
}%
{\color[rgb]{0,0,0}\put(1201,-1786){\line( 1, 0){1350}}
}%
{\color[rgb]{0,0,0}\put(1201,-1711){\line( 1, 0){1350}}
}%
{\color[rgb]{0,0,0}\put(1201,-1861){\line( 1, 0){1350}}
}%
{\color[rgb]{0,0,0}\put(1501,-2311){\line( 1, 0){300}}
}%
{\color[rgb]{0,0,0}\put(1501,-2236){\line( 1, 0){300}}
}%
{\color[rgb]{0,0,0}\put(1501,-2386){\line( 1, 0){300}}
}%
{\color[rgb]{0,0,0}\put(1201,-1411){\line( 1, 0){2100}}
}%
{\color[rgb]{0,0,0}\put(1201,-1336){\line( 1, 0){2100}}
}%
{\color[rgb]{0,0,0}\put(1201,-1486){\line( 1, 0){2100}}
}%
{\color[rgb]{0,0,0}\put(1201,-361){\line( 1, 0){1350}}
}%
{\color[rgb]{0,0,0}\put(1201,-286){\line( 1, 0){1350}}
}%
{\color[rgb]{0,0,0}\put(1201,-436){\line( 1, 0){1350}}
}%
{\color[rgb]{0,0,0}\put(1501,-886){\line( 1, 0){300}}
}%
{\color[rgb]{0,0,0}\put(1501,-811){\line( 1, 0){300}}
}%
{\color[rgb]{0,0,0}\put(1501,-961){\line( 1, 0){300}}
}%
{\color[rgb]{0,0,0}\put(2251,-886){\line( 1, 0){300}}
}%
{\color[rgb]{0,0,0}\put(2251,-811){\line( 1, 0){300}}
}%
{\color[rgb]{0,0,0}\put(2251,-961){\line( 1, 0){300}}
}%
{\color[rgb]{0,0,0}\put(1201, 14){\line( 1, 0){2100}}
}%
{\color[rgb]{0,0,0}\put(1201, 89){\line( 1, 0){2100}}
}%
{\color[rgb]{0,0,0}\put(1201,-61){\line( 1, 0){2100}}
}%
{\color[rgb]{0,0,0}\put(3001,-1786){\line( 1, 0){300}}
}%
{\color[rgb]{0,0,0}\put(3001,-1861){\line( 1, 0){300}}
}%
{\color[rgb]{0,0,0}\put(3001,-1711){\line( 1, 0){300}}
}%
{\color[rgb]{0,0,0}\put(3001,-1636){\line( 1, 0){300}}
}%
{\color[rgb]{0,0,0}\put(3001,-361){\line( 1, 0){300}}
}%
{\color[rgb]{0,0,0}\put(3001,-436){\line( 1, 0){300}}
}%
{\color[rgb]{0,0,0}\put(3001,-286){\line( 1, 0){300}}
}%
{\color[rgb]{0,0,0}\put(3001,-211){\line( 1, 0){300}}
}%
{\color[rgb]{0,0,0}\put(1801,-2536){\framebox(450,450){$\Omega$}}
}%
{\color[rgb]{0,0,0}\put(2551,-2536){\framebox(450,975){$E$}}
}%
{\color[rgb]{0,0,0}\put(1801,-1111){\framebox(450,450){$\Omega$}}
}%
{\color[rgb]{0,0,0}\put(2551,-1111){\framebox(450,975){$E$}}
}%
{\color[rgb]{0,0,0}\put(3301,-2686){\framebox(600,2925){\parbox{0.6in}{\centering{swap\\ test}}}}
}%
{\color[rgb]{0,0,0}\put(3901,-1186){\line( 1, 0){300}}
}%
\put(1201,-2386){\makebox(0,0)[lb]{\smash{{\SetFigFont{12}{14.4}{\rmdefault}{\mddefault}{\updefault}{\color[rgb]{0,0,0}$\ket 0$}%
}}}}
{\color[rgb]{0,0,0}\put(2251,-2311){\line( 1, 0){300}}
}%
\end{picture}%

  \end{center}
  
  \caption{The Verifier's circuit in the \class{QMA} protocol.}
  \label{fig:protocol}
\end{figure}

The reference space $\mathcal{R}$ appears in this protocol, but
Problem~\ref{prob:insecure} places no upper bound on the size of this space, and the
value of the norm being verified may increase with the size of the
space $\mathcal{R}$.  Fortunately, this process stabilizes when
$\dm{R} = \dm{H}$, and so we may assume that this space is of this
size, which at most doubles the number of input qubits to the protocol. 

A straightforward argument based on the continuity of measurement
probabilities (here given as Lemma~\ref{lem:measurement-continuity})
can be used to show that this protocol is correct.
\begin{proposition}
  For $0 < \varepsilon < 1/8$, Protocol~\ref{proto:di-qma} is a
  \class{QMA} protocol for $\prob{DI}_{\varepsilon,\delta}$.
\end{proposition}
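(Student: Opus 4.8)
The plan is to verify the two standard \class{QMA} conditions---completeness and soundness---for Protocol~\ref{proto:di-qma}, using throughout the fact that the swap test applied to a bipartite state $\xi$ on two copies of $\mathcal{K} \tprod \mathcal{R}$ accepts (reports the symmetric outcome) with probability $\tfrac12(1 + \tr(W\xi))$, where $W$ is the swap on the two registers. The key structural observation I would exploit first is that, because the keys $k_1,k_2$ are classical and drawn uniformly, the acceptance probability on a fixed proof $\ketbra\phi$ is linear in the post-channel state and therefore depends only on the averaged channel $\bar E = \frac{1}{2^m}\sum_k E_k$; explicitly it equals $\tfrac12 + \tfrac12 \tr\bigl(W\,[(\bar E \tprod \tidentity{R}) \tprod (\bar E \tprod \tidentity{R})](\ketbra\phi)\bigr)$.

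For completeness, suppose we are in a \textbf{Yes} instance, so there is a subspace $S$ on which every $E_k$ acts trivially up to $\varepsilon$ even in the presence of a reference. I would take the proof to be $\ket\phi = \ket\alpha \tprod \ket\alpha$ for a single pure state $\ket\alpha \in S \tprod \mathcal{R}$. For each fixed key pair the two output states $\tau_1 = (E_{k_1}\tprod\tidentity{R})(\ketbra\alpha)$ and $\tau_2 = (E_{k_2}\tprod\tidentity{R})(\ketbra\alpha)$ satisfy $\tnorm{\tau_i - \ketbra\alpha} \le \varepsilon$, and two applications of Lemma~\ref{lem:measurement-continuity} (first with measurement operator $\tau_2$, then with $\ketbra\alpha$) give $\tr(\tau_1\tau_2) \ge 1 - 2\varepsilon$. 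Since the proof is a product state the two outputs are uncorrelated, so the swap test accepts with probability at least $1 - \varepsilon$ for every key pair, hence on average.

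For soundness, suppose we are in a \textbf{No} instance, so $\dnorm{\Omega - \bar E} \le \varepsilon$. Starting from the linearized expression above, I would run a two-step hybrid argument replacing each copy of $\bar E$ by $\Omega$ in turn; using that $W$ has operator norm one (so $\abs{\tr(WX)} \le \tnorm{X}$) together with monotonicity of the trace norm under the remaining channel, each replacement costs at most $\dnorm{\bar E - \Omega} \le \varepsilon$, for a total error of $2\varepsilon$. It then remains to evaluate the fully depolarized operation: since $\Omega$ discards its input, $[(\Omega \tprod \tidentity{R})^{\tprod 2}](\ketbra\phi) = \frac{\identity{K}}{\dm{K}} \tprod \frac{\identity{K}}{\dm{K}} \tprod \rho_{R_1R_2}$, where $\rho_{R_1R_2}$ is the reduced proof state on the two references. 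As the swap factors across the $\mathcal{K}$ and $\mathcal{R}$ registers, the resulting overlap is $\tfrac{1}{\dm{K}}\tr(W_{\mathcal R}\rho_{R_1R_2}) \le \tfrac{1}{\dm{K}} \le \tfrac12$. Thus the acceptance probability is at most $\tfrac12 + \tfrac{1}{2\dm{K}} + \varepsilon \le \tfrac34 + \varepsilon$, and since $\varepsilon < 1/8$ this leaves a gap of $\tfrac14 - 2\varepsilon > 0$ from the completeness bound $1 - \varepsilon$; \class{QMA} amplification then boosts the parameters to the required error. Efficiency is immediate, since the verifier only runs the given circuit $E$ twice and performs a swap test.

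The main obstacle I anticipate lies entirely in the soundness argument, and in two points within it. The first is recognizing that the average over keys must be taken \emph{before} analyzing the swap test: for a fixed key $E_k$ need not be anywhere near depolarizing---indeed in the private-channel example each $E_k = \Omega_k$ is a unitary on which the test accepts with certainty---so only the linearity of the test's acceptance in the output state lets the privacy promise on $\bar E$ enter. The second is correctly accounting for the reference registers: a dishonest prover is free to make $\rho_{R_1R_2}$ perfectly symmetric, so the bound must rely entirely on the $1/\dm{K}$ suppression coming from the two maximally mixed $\mathcal{K}$ registers, which is precisely what forces the threshold $\varepsilon < 1/8$.
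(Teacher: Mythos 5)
Your proposal is correct, and its overall strategy coincides with the paper's: completeness via two copies of a state in the invariant subspace followed by the swap test and Lemma~\ref{lem:measurement-continuity}, soundness via averaging over the classical keys so that only $\bar E = \frac{1}{2^m}\sum_k E_k$ matters, a two-step hybrid replacing $\bar E \tprod \bar E$ by $\Omega \tprod \Omega$ at cost $O(\varepsilon)$, and a purity bound on the result. Where you genuinely diverge is the final step of the soundness argument, and your version is actually the more careful one. The paper asserts that the swap-test input is within trace distance $2\varepsilon$ of \emph{the completely mixed state} and evaluates its purity (with a sign slip along the way: the symmetric outcome on two copies of $\identity{K}/\dm{K}$ occurs with probability $\frac12 + \frac{1}{2\dm{K}}$, not $\frac12 - \frac{1}{2\dm{K}}$). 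That assertion ignores the reference registers: $(\Omega \tprod \tidentity{R})^{\tprod 2}$ leaves the reduced proof state $\rho_{R_1R_2}$ untouched, and a dishonest prover can make it perfectly symmetric, so the swap-test input is in general far from completely mixed. Your factorization of the swap operator, $\tr(W\xi) = \frac{1}{\dm{K}}\tr(W_{\mathcal{R}}\,\rho_{R_1R_2}) \leq \frac{1}{\dm{K}}$, isolates exactly the suppression that survives a symmetric reference, gives the bound $\frac12 + \frac{1}{2\dm{K}} + \varepsilon \leq \frac34 + \varepsilon$, and makes transparent why the completeness--soundness gap (and hence the threshold $\varepsilon < 1/8$) comes out as it does. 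Your observation that soundness cannot be argued key-by-key (for a fixed key the channel may be unitary) is likewise the right justification for averaging first, something the paper does implicitly. In short, your argument is not merely equivalent to the paper's; it repairs an imprecision in the published soundness analysis while reaching the same conclusion.
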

\begin{proof}
  If $E$ is a `yes' instance of $\prob{DI}_{\varepsilon,\delta}$, then there
  exists a state $\ket\psi \in \mathcal{H \tprod R}$ such that for any key $k
  \in \{1, \ldots, K\}$ we have
  $\tnorm{\hat{E}_k(\ketbra\psi) - \ketbra\psi} \leq \varepsilon$,
  where throughout this proof we use the shorthand notation $\hat{E}_k
  = E_k \tprod \tidentity{R}$.  
  Let
  the input state be $\ket\phi = \ket\psi \tprod \ket \psi$.
  Fixing notation further, let $\hat{E}_k(\ketbra\psi) = \sigma_k$.  Applying
  $\hat{E}_{k_1} \tprod \hat{E}_{k_2}$ to $\ket\psi \tprod \ket\psi$ results in a
  state $\sigma_{k_1} \tprod \sigma_{k_2}$ that satisfies
  \begin{equation}
    \tnorm{\sigma_{k_1} \tprod \sigma_{k_2} - \ketbra{ \psi } \tprod
      \ketbra{ \psi } } \leq 2 \varepsilon,
  \end{equation}
  which follows from the triangle inequality.
  Then, since the state $\ketbra{ \psi } \tprod \ketbra{ \psi }$ is
  symmetric and we can view the swap test can be viewed as a projective
  measurement, Lemma~\ref{lem:measurement-continuity} shows that the
  swap test returns the symmetric outcome on $\sigma_{k_1} \tprod
  \sigma_{k_2}$ with probability at least $1 - 2 \varepsilon$.  This
  implies that when the circuit $E$ is not secure 
  the Verifier accepts with high probability.
  
  It remains to show that when the circuit $E$ is a `no' instance of
  $\prob{DI}_{\varepsilon,\delta}$ the Verifier does not accept any proof state
  with high probability.  In this case we know that $\smalldnorm{ \sum_{k=1}^K
    E_k - \Omega} / K \leq \varepsilon$.  Once more, a straightforward
  argument using the triangle inequality can be used to argue that the tensor product of two copies satisfies the equation
  $ \smalldnorm{ \sum_{k,j=1}^K E_k \tprod E_j - \Omega
  \tprod \Omega} / K^2 \leq 2 \varepsilon$.
  This implies that regardless of the proof state $\ket\psi$ the input
  to the swap test is within trace distance $2 \varepsilon$ of the
  completely mixed state.  On such a state,
  Lemma~\ref{lem:measurement-continuity} implies that 
  the swap test returns the
  symmetric outcome with probability at most
  \begin{equation*}
    \frac{1}{2} - \frac{1}{2} \tr \left[ \left(
        \frac{\identity{K}}{\dm{K}} \right)^2 \right] + 2 \varepsilon
    = \frac{1}{2} - \frac{1}{2 \dm{K}} + 2 \varepsilon,
  \end{equation*}
  and so the probability the Verifier accepts is bounded
  above by $1/2 + 2 \varepsilon$.  Thus, when $\varepsilon < 1/8$,
  there is a constant gap between the acceptance probabilities in the
  two cases, and so $\prob{DI}_{\varepsilon,\delta} \in \class{QMA}$.
\end{proof}

Combining the previous Proposition with Theorem~\ref{thm:hardness} we
obtain the main result.
\begin{theorem}
  For $0 < \varepsilon < 1/8$ and $0 < \delta \leq 1$, the problem $\prob{DI}_{\varepsilon,\delta}$ is \class{QMA}-complete.
\end{theorem}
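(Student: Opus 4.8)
The plan is to obtain \class{QMA}-completeness by combining the two halves established above, matching up their parameter ranges. Completeness requires showing both \class{QMA}-hardness and membership in \class{QMA}, and each direction has already been proved separately; the only task is to check that a single choice of parameters satisfies both.

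First I would invoke Theorem~\ref{thm:hardness} for the hardness direction. That theorem shows $\prob{DI}_{\varepsilon,\delta}$ is \class{QMA}-hard for all $0 < \varepsilon < 1/2$ and all $0 < \delta \leq 1$. Since the present statement restricts to $0 < \varepsilon < 1/8 < 1/2$, the hardness applies verbatim, with no modification to the reduction needed.

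Next I would invoke the immediately preceding Proposition for membership. That result shows Protocol~\ref{proto:di-qma} is a valid \class{QMA} protocol for $\prob{DI}_{\varepsilon,\delta}$ precisely when $0 < \varepsilon < 1/8$, establishing a constant gap between the acceptance probabilities in the \textbf{Yes} and \textbf{No} cases (at least $1 - 2\varepsilon$ versus at most $1/2 + 2\varepsilon$). Thus $\prob{DI}_{\varepsilon,\delta} \in \class{QMA}$ in exactly this range of $\varepsilon$.

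The only point requiring any attention is the compatibility of the two parameter windows: hardness holds for the wider range $\varepsilon < 1/2$, while membership needs the narrower $\varepsilon < 1/8$. The intersection is $0 < \varepsilon < 1/8$, which is precisely the hypothesis of the theorem, so both directions hold simultaneously and $\prob{DI}_{\varepsilon,\delta}$ is \class{QMA}-complete. There is no genuine obstacle here beyond this bookkeeping; the substantive work was done in Theorem~\ref{thm:hardness} and the preceding Proposition.
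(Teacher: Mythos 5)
Your proposal is correct and matches the paper's own argument exactly: the paper states this theorem as an immediate consequence of combining Theorem~\ref{thm:hardness} (hardness for $0 < \varepsilon < 1/2$) with the preceding Proposition (membership for $0 < \varepsilon < 1/8$), with the intersection of parameter ranges giving the stated hypothesis. Nothing further is needed.
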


\section{Discussion}

We have shown the \class{QMA}-hardness of a general version of the problem of testing the behaviour of a quantum circuit.
This result generalizes the proofs of hardness for many of the known circuit problems that are \class{QMA}-hard~\cite{JanzingW+05,Rosgen10non-isometry}, as well as allows for simple proofs of hardness for new circuit problems.
As an application of this result we have shown that the problem of detecting insecure encryption is complete for \class{QMA} by in addition finding an efficient \class{QMA} verifier for the problem.

An open problem related to this is to find a \class{QMA} verifier for
the \prob{Pure Fixed Point} problem, or an argument that the problem is likely to lie outside of the class.
The direct approach to construct a verifier using the swap test on (ideally) two copies of the fixed-point state, similar to the verifier in~\cite{Rosgen10non-isometry}, does not seem to work: the circuit that measures a qubit in the computational basis and then applies the Pauli $X$ gate, when applied to half of the input space, maps the symmetric state $\ket{01} + \ket{10}$ to a symmetric state.
This circuit, however, does not have any pure (approximate) fixed points.

\section*{Acknowledgements}

I am grateful for discussions with Markus Grassl, Matthew McKague, and
Lana Sheridan.  
This work has been supported by the Centre for Quantum
Technologies, which is funded by the Singapore Ministry of Education
and the Singapore National Research Foundation.

\newcommand{\arxiv}[2][quant-ph]{\href{http://arxiv.org/abs/#2}{arXiv:#2 [#1]}}
  \newcommand{\oldarxiv}[2][quant-ph]{\href{http://arxiv.org/abs/#1/#2}{arXiv:#1/#2}}


\begin{thebibliography}{10}
 \providecommand{\doi}[1]{{\sc doi}: \href{http://dx.doi.org/#1}{#1}}
 \providecommand{\urlprefix}{{\sc url} }
 \providecommand{\eprintprefix}{{\sc eprint}: }

\bibitem{AharonovK+98}
D.~Aharonov, A.~Kitaev, and N.~Nisan.
\newblock Quantum circuits with mixed states.
\newblock In \emph{Proceedings of the 30th ACM Symposium on the Theory of
  Computing}, pp. 20--30. 1998.
\newblock \doi{10.1145/276698.276708}.
\newblock \eprintprefix\oldarxiv{9806029}.

\bibitem{AmbainisM+00}
A.~Ambainis, M.~Mosca, A.~Tapp, and R.~de~Wolf.
\newblock Private quantum channels.
\newblock In \emph{Proceedings of the 41st {IEEE} Symposium on Foundations of
  Computer Science}, pp. 547--553. 2000.
\newblock \doi{10.1109/SFCS.2000.892142}.
\newblock \eprintprefix\oldarxiv{0003101}.

\bibitem{AmbainisS04}
A.~Ambainis and A.~Smith.
\newblock Small pseudo-random families of matrices: Derandomizing approximate
  quantum encryption.
\newblock In \emph{Proceedings of the 8th International Workshop on
  Randomization and Computation}, volume LNCS 3122, pp. 249--260. 2004.
\newblock \doi{10.1007/978-3-540-27821-4\_23}.
\newblock \eprintprefix\oldarxiv{0404075}.

\bibitem{BarnumC+02}
H.~Barnum, C.~Cr{\'e}peau, D.~Gottesman, A.~Smith, and A.~Tapp.
\newblock Authentication of quantum messages.
\newblock In \emph{Proceedings of the 43rd {IEEE} Symposium on Foundations of
  Computer Science}, pp. 449 -- 458. 2002.
\newblock \doi{10.1109/SFCS.2002.1181969}.
\newblock \eprintprefix\oldarxiv{0205128}.

\bibitem{BeigiS07}
S.~Beigi and P.~W. Shor.
\newblock On the complexity of computing zero-error and {Holevo} capacity of
  quantum channels, 2007.
\newblock \eprintprefix\arxiv{0709.2090v3}.

\bibitem{BoykinR03}
P.~O. Boykin and V.~Roychowdhury.
\newblock Optimal encryption of quantum bits.
\newblock \emph{Physical Review A}, 67(4):042317, 2003.
\newblock \doi{10.1103/PhysRevA.67.042317}.
\newblock \eprintprefix\oldarxiv{0003059}.

\bibitem{BraunsteinLS99}
S.~Braunstein, H.-K. Lo, and T.~Spiller.
\newblock Forgetting qubits is hot to do.
\newblock Unpublished manuscript, 1999.

\bibitem{BuhrmanC+01}
H.~Buhrman, R.~Cleve, J.~Watrous, and R.~de~Wolf.
\newblock Quantum fingerprinting.
\newblock \emph{Physical Review Letters}, 87(16):167902, 2001.
\newblock \doi{10.1103/PhysRevLett.87.167902}.
\newblock \eprintprefix\oldarxiv{0102001}.

\bibitem{EkertA+02}
A.~K. Ekert, C.~M. Alves, D.~K. Oi, M.~Horodecki, P.~Horodecki, and L.~C. Kwek.
\newblock Direct estimations of linear and nonlinear functionals of a quantum
  state.
\newblock \emph{Physical Review Letters}, 88(21):217901, 2002.
\newblock \doi{10.1103/PhysRevLett.88.217901}.
\newblock \eprintprefix\oldarxiv{0203016}.

\bibitem{Fannes73}
M.~Fannes.
\newblock A continuity property of the entropy density for spin lattice
  systems.
\newblock \emph{Communications in Mathematical Physics}, 31(4):291--294, 1973.
\newblock \doi{10.1007/BF01646490}.

\bibitem{HaydenL+04}
P.~Hayden, D.~Leung, P.~W. Shor, and A.~Winter.
\newblock Randomizing quantum states: constructions and applications.
\newblock \emph{Communications in Mathematical Physics}, 250:371--391, 2004.
\newblock \doi{10.1007/s00220-004-1087-6}.
\newblock \eprintprefix\oldarxiv{0307104}.

\bibitem{Helstrom67}
C.~W. Helstrom.
\newblock Detection theory and quantum mechanics.
\newblock \emph{Information and Control}, 10(3):254--291, 1967.
\newblock \doi{10.1016/S0019-9958(67)90302-6}.

\bibitem{JanzingW+05}
D.~Janzing, P.~Wocjan, and T.~Beth.
\newblock ``{Non-identity-check}'' is {QMA}-complete.
\newblock \emph{International Journal of Quantum Information}, 3(3):463--473,
  2005.
\newblock \doi{10.1142/S0219749905001067}.
\newblock \eprintprefix\oldarxiv{0305050}.

\bibitem{KempeK+06}
J.~Kempe, A.~Kitaev, and O.~Regev.
\newblock The complexity of the local {Hamiltonian} problem.
\newblock \emph{{SIAM} Journal on Computing}, 35(5):1070--1097, 2006.
\newblock \doi{10.1137/S0097539704445226}.
\newblock \eprintprefix\oldarxiv{0406180}.

\bibitem{KitaevS+02}
A.~Y. Kitaev, A.~H. Shen, and M.~N. Vyalyi.
\newblock \emph{Classical and Quantum Computation}, volume~47 of \emph{Graduate
  Studies in Mathematics}.
\newblock American Mathematical Society, 2002.

\bibitem{Liu06}
Y.-K. Liu.
\newblock Consistency of local density matrices is {QMA}-complete.
\newblock In \emph{Proceedings of the 10th International Workshop on
  Randomization and Computation}, volume 4110 of \emph{Lecture Notes in
  Computer Science}, pp. 438--449. Springer, 2006.
\newblock \doi{10.1007/11830924\_40}.
\newblock \eprintprefix\oldarxiv{0604166}.

\bibitem{MarriottW05}
C.~Marriott and J.~Watrous.
\newblock Quantum {Arthur-Merlin} games.
\newblock \emph{Computational Complexity}, 14(2):122--152, 2005.
\newblock \doi{10.1007/s00037-005-0194-x}.
\newblock \eprintprefix\oldarxiv[cs]{0506068}.

\bibitem{Rosgen10non-isometry}
B.~Rosgen.
\newblock Testing non-isometry is {QMA}-complete.
\newblock In \emph{Proceedings of the 5th Conference on the Theory of Quantum
  Computation, Communication, and Cryptography}, pp. 63--76. 2010.
\newblock \doi{10.1007/978-3-642-18073-6\_6}.
\newblock \eprintprefix\arxiv{0910.3740}.

\bibitem{RosgenW05}
B.~Rosgen and J.~Watrous.
\newblock On the hardness of distinguishing mixed-state quantum computations.
\newblock In \emph{Proceedings of the 20th Conference on Computational
  Complexity}, pp. 344--354. 2005.
\newblock \doi{10.1109/CCC.2005.21}.
\newblock \eprintprefix\oldarxiv[cs]{0407056}.

\bibitem{SchuchC+08}
N.~Schuch, I.~Cirac, and F.~Verstraete.
\newblock Computational difficulty of finding matrix product ground states.
\newblock \emph{Physical Review Letters}, 100(25):250501, 2008.
\newblock \doi{10.1103/PhysRevLett.100.250501}.
\newblock \eprintprefix\arxiv{0802.3351}.

\bibitem{SchuchV09}
N.~Schuch and F.~Verstraete.
\newblock Computational complexity of interacting electrons and fundamental
  limitations of density functional theory.
\newblock \emph{Nature Physics}, 5(10):732 -- 735, 2009.
\newblock \doi{doi:10.1038/nphys1370}.
\newblock \eprintprefix\arxiv{0712.0483}.

\bibitem{Winter99}
A.~Winter.
\newblock Coding theorem and strong converse for quantum channels.
\newblock \emph{{IEEE} Transactions on Information Theory}, 45(7):2481--2485,
  1999.
\newblock \doi{10.1109/18.796385}.

\end{thebibliography}
\end{document}